\newtheorem{Rule}{Rule}{\bfseries}{\itshape}
\newtheorem{procedure}{Procedure}{\bfseries}{\itshape}
\newtheorem{definition}{Definition}{\bfseries}{\itshape}
\newtheorem{example}{Example}{\bfseries}{\itshape}
\newtheorem{theorem}{Theorem}{\bfseries}{\itshape}
\newtheorem{corollary}{Corollary}{\bfseries}{\itshape}
\newtheorem{lemma}{Lemma}{\bfseries}{\itshape}
\newcommand{\bbN}{\mathbb{N}}
\newcommand{\calA}{\mathcal{A}}
\newcommand{\calB}{\mathcal{B}}
\newcommand{\calF}{\mathcal{F}}
\newcommand{\calT}{\mathcal{T}}
\newcommand{\scrL}{\mathscr{L}}
\newcommand{\cat}{\circ}
\providecommand\barcirc{\mathpalette\@barred\cat}
\def\@barred#1#2{\ooalign{\hfil$\sharp1-$\hfil\cr\hfil$\sharp1#2$\hfil\cr}}
\newcommand{\LS}{\mathit{LS}}
\newcommand{\TOP}{\mathit{TOP}}
\newcommand{\ITS}{\mathit{ITS}}
\DeclareMathOperator{\ch}{\mathit{ch}}
\DeclareMathOperator{\Inf}{\mathit{Inf}}
\DeclareMathOperator{\Cover}{\mathit{Cover}}
\DeclareMathOperator{\Mini}{\mathit{Mini}}
\DeclareMathOperator{\STS}{\mathit{STS}}
\DeclareMathOperator{\RSTS}{\mathit{\mu STS}}
\DeclareMathOperator{\red}{\mathit{red}}
\DeclareMathOperator{\green}{\mathit{green}}
\DeclareMathOperator{\yellow}{\mathit{yellow}}
\DeclareMathOperator{\set}{\mathit{set}}
\DeclareMathOperator{\bucket}{\mathit{bucket}}
\DeclareMathOperator{\head}{\mathit{head}}
\title{Can Nondeterminism Help Complementation?\footnote{The research was supported by NSF grants 0954132 and 1143830.}}
\author{Yang Cai
\institute{MIT CSAIL \\
The Stata Center, 32-G696 \\
Cambridge, MA 02139 USA}
\email{ycai@csail.mit.edu}
\and
Ting Zhang
\institute{Iowa State University \\
226 Atanasoff Hall \\
Ames, IA 50011 USA}
\email{tingz@cs.iastate.edu}
}
\begin{document}
\maketitle
\date{}
\thispagestyle{empty}

\begin{abstract}
Complementation and determinization are two fundamental notions in automata theory. The close relationship between the two has been well observed in the literature. In the case of nondeterministic finite automata on finite words (NFA), complementation and determinization have the same state complexity, namely $\Theta(2^{n})$ where $n$ is the state size. The same similarity between determinization and complementation was found for B\"{u}chi automata, where both operations were shown to have $2^{\Theta(n \lg n)}$ state complexity. An intriguing question is whether there exists a type of $\omega$-automata whose determinization is considerably harder than its complementation. In this paper, we show that for all common types of $\omega$-automata, the determinization problem has the same state complexity as the corresponding complementation problem at the granularity of $2^{\Theta(\cdot)}$. In particular, we show a determinization construction for Streett automata with state complexity bounded by $2^{12n}(0.37n)^{n^{2}+8n}$.
\end{abstract}

\section{Introduction}
\label{sec:introduction}

Automata on infinite words ($\omega$-automata) have wide applications in synthesis and verification of reactive concurrent systems. Complementation and determinization are two fundamental notions in automata theory. The complementation problem is to construct, given an $\omega$-automaton $\calA$, an $\omega$-automaton $\calB$ that recognizes the complementary language of $\calA$. Provided that a given $\calA$ is nondeterministic, the determinization problem is to construct a deterministic $\omega$-automaton $\calB$ that recognizes the same language as $\calA$ does.

The close relationship between determinization and complementation has been well observed in the literature (see~\cite{Choueka/74/Theories,Vardi/07/Buchi} for discussions). A deterministic $\omega$-automaton can be trivially complemented by dualizing its acceptance condition. As a result, a lower bound on complementation applies to determinization while an upper bound on determinization applies to complementation. Besides easily rendering complementation, determinization is crucial in decision problems for tree temporal logics, logic games and system synthesis. For example, using game-theoretical semantics~\cite{Gurevich+Harrington/82/Trees}, complementation of $\omega$-tree automata ($\omega$-tree complementation) not only requires complementation of $\omega$-automata, but also requires the complementary $\omega$-automata be deterministic (called co-determinization). Therefore, lowering the cost of $\omega$-determinization also improves the performance of $\omega$-tree complementation.

In the case of nondeterministic finite automata on finite words (NFA), complementation and determinization have the same state complexity, namely $\Theta(2^{n})$ where $n$ is the state size. The same similarity between determinization and complementation was found for B\"{u}chi automata, where both operations were shown to have $2^{\Theta(n \lg n)}$ state complexity. An intriguing question is whether there exists a type of $\omega$-automata whose determinization is considerably harder than its complementation. In this paper, we show that for all common types of $\omega$-automata, the determinization problem has the same state complexity as the corresponding complementation problem at the granularity of $2^{\Theta(\cdot)}$.

\paragraph{Related Work.} B\"{u}chi started the theory of $\omega$-automata. The original $\omega$-automata used to establish the decidability of S1S~\cite{Buchi/66/Decision} are now referred to as B\"{u}chi automata. Shortly after B\"{u}chi's work, McNaughton proved a fundamental theorem in the theory of $\omega$-automata, that is, $\omega$-regular languages (the languages that are recognized nondeterministic B\"{u}chi automata) are exactly those recognizable by the deterministic version of a type of $\omega$-automata, now referred to as Rabin automata\footnotemark~\cite{McNaughton/66/Testing}.

\footnotetext{\cite{McNaughton/66/Testing} used Muller condition which, however, was converted from Rabin condition.}

The complexity of McNaughton's B\"{u}chi determinization is double exponential. In 1988, Safra proposed a type of tree structures (now referred to as Safra trees) to obtain a B\"{u}chi determinization that converts a nondeterministic B\"{u}chi automaton of state size $n$ to an equivalent deterministic Rabin automaton of state size $2^{O(n\lg n)}$ and index size $O(n)$~\cite{Safra/88/Complexity}. Safra's construction is essentially optimal as the lower bound on state size for B\"{u}chi complementation is $2^{\Omega(n\lg n)}$~\cite{Michel/88/Complementation,Loding/99/Optimal}. Later, Safra generalized the B\"{u}chi determinization to a Streett determinization which, given a nondeterministic Streett automaton of state size $n$ and index size $k$, produces an equivalent deterministic Rabin automaton of state size $2^{O(nk\lg nk)}$ and index size $O(nk)$~\cite{Safra/92/Exponential}. A variant B\"{u}chi determinization using a similar tree structure was proposed by Muller and Schupp~\cite{Muller+Schupp/95/Simulating}. Recently, Piterman improved both of Safra's determinization procedures with a node renaming scheme~\cite{Piterman/06/From}. Piterman's constructions are more efficient than Safra's, though the asymptotical bounds in terms of $2^{O(\cdot)}$ are the same. A big advantage of Piterman's constructions, however, is to output deterministic parity automata, which is easier to manipulate than deterministic Rabin automata. For example, there exists no efficient procedure to complement deterministic Rabin automata to B\"{u}chi automata~\cite{Safra+Vardi/89/Omega}, while such complementation is straightforward and efficient for deterministic parity automata.

In~\cite{Cai+Zhang+Luo/09/Improved,Cai+Zhang/11/Tight+Lower,Cai+Zhang/11/Tight+Upper} we established tight bounds for the complementation of $\omega$-automata with rich acceptance conditions, namely Rabin, Streett and parity automata. The state complexities of the corresponding determinization problems, however, are yet to be settled. In particular, a large gap exists between the lower and upper bounds for Streett determinization. A Streett automaton can be viewed as a B\"{u}chi automaton $\langle Q, \Sigma, Q_{0}, \Delta, \calF\rangle$ except that the acceptance condition $\calF=\langle G, B\rangle_{I}$, where $I=[1..k]$ for some $k$ and $G, B: I \to 2^{Q}$, comprises $k$ pairs of \emph{enabling sets} $G(i)$ and \emph{fulfilling sets} $B(i)$. A run is accepting if for every $i \in I$, if the run visits $G(i)$ infinitely often, then so does it to $B(i)$. Therefore, Streett automata naturally express \emph{strong fairness} conditions that characterize meaningful computations~\cite{Francez/84/Generalized,Francez/1986/Fairness}. Moreover, Streett automata can be exponentially more succinct than B\"{u}chi automata in encoding infinite behaviors of systems~\cite{Safra+Vardi/89/Omega}. As a results, Streett automata have an advantage in modeling the behaviors of concurrent and reactive systems. For Streett determinization, the gap between current lower and upper bounds is huge: the lower bound is $2^{\Omega(n^{2} \lg n)}$~\cite{Cai+Zhang/11/Tight+Lower} and the upper bound is $2^{O(nk\lg nk)}$~\cite{Safra/92/Exponential,Piterman/06/From} when $k$ is large (say $k=\omega(n)$).

In this paper we focus on Streett determinization. We show a Streett determinization whose state complexity matches the lower bounds established in~\cite{Cai+Zhang/11/Tight+Lower}. More precisely, our construction has state complexity $2^{O(n\lg n+nk\lg k)}$ for $k=O(n)$ and $2^{O(n^{2} \lg n)}$ for $k=\omega(n)$. We note that this improvement is not only meant for large $k$. When $k= O(\log n)$, the difference between $2^{O(n\lg n+nk\lg k)}$ and $2^{O(nk\lg nk)}$ is already substantial. In fact, no matter how large $k$ is, our state complexity is always bounded by $2^{12n}(0.37n)^{n^{2}+8n}$ while the current best bound for $k=n-1$is $(12)^{n^{2}}n^{3n^{2}+n}$~\cite{Piterman/06/From}.

The phenomenon that determinization and complementation have the same state complexity does not stop at Streett automata; we also show that this phenomenon holds for generalized B\"{u}chi automata, parity automata and Rabin automata. This raises a very interesting question: do determinization and complementation always ``walk hand in hand''?  Although the exact complexities of complementation and determinization for some or all types of $\omega$-automata could be different, the ``coincidence'' at the granularity of $2^{\Theta(\cdot)}$ is already intriguing\footnotemark.

\footnotetext{Recent work by Colcombet and Zdanowski, and by Schewe showed that the state complexity of determinization of B\"{u}chi automata on alphabets of \emph{unbounded} size is between $\Omega((1.64n)^{n})$~\cite{Colcombet+Zdanowski/09/Tight} and $O((1.65n)^{n})$~\cite{Schewe/09/Tighter}, which is strictly higher than the state complexity of B\"{u}chi complementation which is between $\Omega(L(n))$~\cite{Yan/06/Lower} and $O(n^{2}L(n))$~\cite{Schewe/09/Buchi} (where $L(n) \approx (0.76n)^{n})$). However, Schewe's determinization construction produces Rabin automata with exponentially large index size, and it is not known whether Colcombet and Zdanowski's lower bound result can be generalized to B\"{u}chi automata on conventional alphabets of fixed size.}

\paragraph{Our Approaches.}
Our improved construction for Streett determinization bases on two ideas. The first one is what we have exploited in obtaining tight upper bounds for Streett complementation~\cite{Cai+Zhang/11/Tight+Upper}, namely, the larger the size of Streett index size, the higher correlations in the runs of Streett automata. We used two tree structures: $\ITS$ (\emph{Increasing Tree of Sets}) and $\TOP$ (\emph{Tree of Ordered Partitions}) to characterize those correlations. We observed that there is a similarity between $\TOP$ and Safra trees for B\"{u}chi determinization~\cite{Safra/88/Complexity}. As Safra trees for Streett determinization are generalization of those for B\"{u}chi determinization, we conjectured that $\ITS$ should have a role in improving Streett determinization. Our study confirmed this expectation; using $\ITS$ we can significantly reduce the size of Safra trees for Streett determinization.

The second idea is a new naming scheme. Bounding the size of Safra trees alone cannot bring down the state complexity when the Streett index is small (i.e., $k=O(n)$), because the naming cost becomes a dominating factor in this case. Naming is an integral part of Safra trees. Every node in a Safra tree is associated with a name, which is used to track changes of the node between the tree (state) and its successors. The current name allocation is a retail-style strategy; when a new node is created, a name from the pool of unused names is selected arbitrarily and assigned, and when a node is removed, its name is recycled to the pool. In contrast, our naming scheme is more like wholesale; the name space is divided into even blocks and every block is allocated at the same time. When a branch is created, an unused block is assigned to it, and when a branch is changed, the corresponding block is recycled.

\paragraph{Paper Organization.}
Section~\ref{sec:preliminaries} presents notations and basic terminology in automata theory. Section~\ref{sec:safra-determinization} introduces Safra's construction for B\"{u}chi and Streett determinization. Section~\ref{sec:improved-Streett-determinization} presents our construction for Streett determinization. Section~\ref{sec:complexity} establishes tight upper bounds for the determinization of Streett, generalized B\"{u}chi, parity, and Rabin automata. Section~\ref{sec:conclusion} concludes with some discussion on future work. Due to space limit, most of the proofs are not included, but they can be found in the full version of this paper on authors' websites.


\section{Preliminaries}
\label{sec:preliminaries}

\paragraph{Basic Notations.}
Let $\bbN$ denote the set of natural numbers. We write $[i..j]$ for $\{ k \in \bbN \, \mid \, i \le k \le j\}$, $[i..j)$ for $[i..j-1]$, and $[n]$ for $[0..n)$. For an infinite sequence $\varrho$, we use $\varrho(i)$ to denote the $i$-th component for $i \in \bbN$. For a finite sequence $\alpha$, we use $|\alpha|$ to denote the length of $\alpha$, $\alpha[i]$ ($i \in [1..|\alpha|]$) to denote the object at the $i$-th position, and $\alpha[i..j]$ (resp. $\alpha[1..j)$) to denote the subsequence of $\alpha$ from position $i$ to position $j$ (resp. $j-1$). We reserve $n$ and $k$ as parameters of a determinization instance ($n$ for state size and $k$ for index size). Define $I=[1..k]$.


\paragraph{Automata and Runs.}
An $\omega$-automaton is a tuple $\calA=(\Sigma, Q, Q_{0}, \Delta, \calF)$ where $\Sigma$ is an alphabet, $Q$ is a finite set of states, $Q_{0} \subseteq Q$ is a set of initial states, $\Delta \subseteq Q \times \Sigma \times Q$ is a set of transition relations, and $\calF$ is an acceptance condition. A \emph{finite run} of $\calA$ from state $q$ to state $q'$ over a finite word $w$ is a sequence of states $\varrho=\varrho(0)\varrho(1)\cdots\varrho(|w|)$ such that $\varrho(0)=q, \varrho(|w|)=q'$ and $\langle\varrho(i),w(i),\varrho(i+1)\rangle \in\Delta$ for all $i \in [|w|]$. An infinite word ($\omega$-words) over $\Sigma$ is an infinite sequence of letters in $\Sigma$. A \emph{run} $\varrho$ of $\calA$ over an $\omega$-word $w$ is an infinite sequence of states in $Q$ such that $\varrho (0)\in Q_{0}$ and, $\langle\varrho(i),w(i),\varrho(i\!+\!1)\rangle \in \Delta$ for $i \in \bbN$. Let $\Inf(\varrho)$ be the set of states that occur infinitely many times in $\varrho$. An automaton accepts $w$ if $\varrho$ satisfies $\calF$, which usually is defined as a predicate on $\Inf(\varrho)$. By $\scrL(\calA)$ we denote the set of $\omega$-words accepted by $\calA$.

\paragraph{Acceptance Conditions and Types.}
$\omega$-automata are classified according their acceptance conditions. Below we list automata of common types. Let $G, B$ be two functions $I \to 2^{Q}$.
\begin{itemize}
\item \emph{Generalized B\"{u}chi}: $\langle B \rangle_{I}$: $\forall i \in I, \Inf(\varrho) \cap B(i) \neq \emptyset$.
\item \emph{B\"{u}chi}: $\langle B \rangle_{I}$ with $I=\{1\}$ (i.e., $k=1$).
\item \emph{Streett}, $\langle G,B\rangle_{I}$: $\forall i \in I$, $\Inf(\varrho) \cap G(i)\neq\emptyset \to \Inf(\varrho)\cap B(i)\neq\emptyset$.
\item \emph{Parity}, $\langle G,B\rangle_{I}$ with $B(1) \subset G(1) \subset \cdots \subset B(k) \subset G(k)$.
\item \emph{Rabin}, $[G,B]_{I}$: $\exists i \in I$, $\Inf(\varrho)\cap G(i)\neq\emptyset \wedge \Inf(\varrho)\cap B(i)=\emptyset$.
\end{itemize}
For simplicity, we denote a B\"{u}chi condition by $F$ (i.e., $F=B(1)$) and call it the final set of $\calA$. Note that B\"{u}chi, generalized B\"{u}chi and parity automata are all subclasses of Streett automata. For a Streett condition $\langle G,B\rangle_{I}$, if there exist $i,i' \in I$, $B(i)=B(i')$, then we can simplify the condition by replacing both $\langle G(i), B(i)\rangle$ and $\langle G(i'), B(i')\rangle$ by $\langle G(i) \cup G(i'), B(i) \rangle$. For this reason, for every Streett condition $\langle G,B\rangle_{I}$, we assume that $B$ is injective and hence $k=|I| \le 2^{n}$.

\paragraph{Trees.}
A tree is a set $V \subseteq \bbN^{*}$ such that if $v \cdot i \in V$ then $v \in V$ and $v \cdot j \in V$ for $j \le i$. In this paper we only consider finite trees. Elements in $V$ are called \emph{nodes} and $\epsilon$ is called the \emph{root}. Nodes $v \cdot i$ are \emph{children} of $v$ and they are \emph{siblings} to one another. The set of $v$'s children is denoted by $\ch(v)$. A node is a leaf if it has no children. Given an alphabet $\Sigma$, a \emph{$\Sigma$-labeled tree} is a pair $\langle V, L \rangle$, where $V$ is a tree and $L: V \to \Sigma$ assigns a letter to each node in $V$. We refer to $v \in V$ as \emph{$V$-value} (\emph{structural value}) and $L(v)$ as \emph{$L$-value} (\emph{label value}) of $v$. 
\section{Safra's Streett Determinization}
\label{sec:safra-determinization}

In this section we introduce Safra's constructions for Streett determinization.

We start with the idea behind B\"{u}chi determinization~\cite{Safra/88/Complexity}. Let $\calA=\langle Q, Q_{0}, \Sigma, \Delta, F \rangle$ be a nondeterministic B\"{u}chi automaton. By the standard subset construction~\cite{Rabin+Scott/59/Finite}, it is not hard to build a deterministic $\omega$-automaton $\calB$ such that if a run $\varrho=q_{0}, q_{1}, \ldots \in Q^{\omega}$ of $\calA$ over an infinite word $w$ is accepting, then so is the run $\tilde{\varrho}=\tilde{q}_{0}, \tilde{q}_{1}, \ldots \in (2^{Q})^{\omega}$ of $\calB$ over $w$, and $q_{i} \in \tilde{q}_{i}$ for every $i \ge 0$. The hard part is to also guarantee that an accepting run $\tilde{\varrho}$ of $\calB$ over $w$ induces an accepting run $\varrho$ of $\calA$ over $w$. Let $S \stackrel{F}{\hookrightarrow} S'$ denote that for every state $q' \in S'$, there exists a state $q \in S$ and a finite run $\rho$ of $\calA$ such that $\rho$ goes from $q$ to $q'$ and visits $F$. The key idea in many determinization constructions~\cite{McNaughton/66/Testing,Safra/88/Complexity,Safra/92/Exponential,Muller+Schupp/95/Simulating,Piterman/06/From} relies on the following lemma, which itself is a consequence of K\"{o}nig's lemma.

\begin{lemma}[\cite{McNaughton/66/Testing,Safra/88/Complexity}]
\label{lem:Buchi-connecting-dots}
Let $\tilde{\varrho}=\tilde{q}_{0}, \tilde{q}_{1}, \ldots \in (2^{Q})^{\omega}$ and $(S_{i})_{\omega}=S_{0}, S_{1}, \ldots$ an infinite subsequence of $\tilde{\varrho}$ such that for every $i \ge 0$, $S_{i} \stackrel{F}{\hookrightarrow} S_{i+1}$, then there is an accepting run $\varrho=q_{0}, q_{1}, \ldots \in Q^{\omega}$ of $\calA$ such that $q_{i} \in \tilde{q}_{i}$ for every $i \ge 0$.
\end{lemma}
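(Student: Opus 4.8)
The plan is to carry out the standard König's-lemma argument behind all of these determinization proofs: turn each link $S_i \stackrel{F}{\hookrightarrow} S_{i+1}$ into a concrete finite $F$-visiting run of $\calA$ over the matching segment of $w$, and then thread these segments into one infinite run. First I would fix notation: write $S_i = \tilde{q}_{n_i}$ with $n_0 < n_1 < \cdots$, and recall that $\tilde\varrho$ is the run over $w$ of the subset-construction automaton $\calB$, so that $\tilde{q}_0 = Q_0$ and $\tilde{q}_{j+1} \supseteq \{\, q' \mid \exists q \in \tilde{q}_j,\ \langle q, w(j), q'\rangle \in \Delta \,\}$. The consequence I will lean on throughout is a monotonicity invariant: every run of $\calA$ over a suffix $w[j..)$ whose first state lies in $\tilde{q}_j$ has its $t$-th state in $\tilde{q}_{j+t}$ for all $t$. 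Thus the membership clause ``$q_i \in \tilde{q}_i$'' in the conclusion comes for free once the run I build starts from a state of $Q_0 = \tilde{q}_0$.

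Next, for each $i \ge 0$ and each $q' \in S_{i+1}$ I would use $S_i \stackrel{F}{\hookrightarrow} S_{i+1}$ to pick a predecessor $\pi_i(q') \in S_i$ together with a finite run $\rho_i(q')$ of $\calA$ over $w[n_i..n_{i+1})$ going from $\pi_i(q')$ to $q'$ and visiting $F$. I then form a tree $T$ whose nodes are a fresh root together with all pairs $(q,i)$ with $q \in S_i$: the root's children are the $(q,0)$, and the parent of $(q', i+1)$ is declared to be $(\pi_i(q'), i)$. This $T$ is finitely branching because $Q$ is finite, and it is infinite because every level $\{(q,i) \mid q \in S_i\}$ is nonempty (in the context in which the lemma is applied the $S_i$ are always nonempty). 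By König's lemma $T$ has an infinite branch, that is, states $p_i \in S_i$ with $p_i = \pi_i(p_{i+1})$ for every $i$.

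It then remains to assemble and check the run. Gluing $\rho_0(p_1), \rho_1(p_2), \rho_2(p_3), \ldots$ at their shared endpoints $p_1, p_2, \ldots$ yields an infinite run $\sigma$ of $\calA$ over $w[n_0..)$ with $\sigma(0) = p_0 \in \tilde{q}_{n_0}$ that meets $F$ inside each block and hence infinitely often. Since $p_0 \in \tilde{q}_{n_0}$, the state $p_0$ is reachable from $Q_0$ along $w[0..n_0)$ in $\calB$'s subset construction, so there is a finite run $\tau$ of $\calA$ from some $q_0 \in Q_0$ to $p_0$ over $w[0..n_0)$; concatenating $\tau$ and $\sigma$ produces $\varrho = q_0, q_1, \ldots$, a run of $\calA$ over $w$. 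The monotonicity invariant gives $q_i \in \tilde{q}_i$ for all $i$, and since $F$ is finite and visited infinitely often we get $\Inf(\varrho) \cap F \neq \emptyset$, so $\varrho$ is accepting, as required.

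The main obstacle I anticipate is the bookkeeping that links the abstract relation $\stackrel{F}{\hookrightarrow}$ to the concrete word: the definition of $S \stackrel{F}{\hookrightarrow} S'$ only promises ``a finite run'', so care is needed to insist that its witnesses run over precisely the segment $w[n_i..n_{i+1})$ --- legitimate because the $S_i$ are exactly the subset states read by $\calB$ at times $n_i$ --- and to ensure the glued run stays inside $\tilde\varrho$ at every position rather than only at the sample points; both are discharged by the monotonicity invariant. By comparison the tree construction and the König's-lemma step are routine, the only genuinely clever ingredient being the decision to orient $T$ backwards in time, using the chosen predecessor map $\pi_i$ as the parent pointer, so that an infinite branch of $T$ is the same thing as a coherent chain of $F$-connected representatives running through all the $S_i$.
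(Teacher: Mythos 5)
Your proof is correct and follows exactly the route the paper intends: the paper gives no proof of this lemma, citing it from McNaughton and Safra and noting only that it is ``a consequence of K\"{o}nig's lemma,'' and your argument---extracting a predecessor map from each link $S_i \stackrel{F}{\hookrightarrow} S_{i+1}$, building the finitely-branching backward tree, applying K\"{o}nig's lemma, and gluing the resulting chain of $F$-visiting segments onto a prefix from $Q_0$---is precisely that standard argument. Your explicit handling of the two implicit assumptions (that the witness runs live over the segments $w[n_i..n_{i+1})$ and that the $S_i$ are nonempty) is appropriate care rather than a deviation.
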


The idea was generalized to Streett determinization~\cite{Safra/92/Exponential}. Let $\calA=\langle Q, Q_{0}, \Sigma, \Delta, \langle G,B\rangle_{I} \rangle$ be a Streett automaton. An index set $J \subseteq I$ serves as a \emph{witness set} for an accepting run $\varrho$ of $\calA$ in the following sense: for every $j$ in $J$, $\varrho$ visits $B(j)$ infinitely often while for every $j \in I \setminus J$, $\varrho$ visits $G(j)$ only finitely many times. We call indices in $I \setminus J$ \emph{negative obligations} and indices in $J$ \emph{positive obligations}. It is easily seen that a run $\varrho$ is accepting if and only if $\varrho$ admits a witness set $J \subseteq I$. We say that a finite run $\rho$ \emph{fulfills} $J$ if for every $j$ in $J$, $\rho$ visits $B(j)$, but for every $j \in I \setminus J$, $\rho$ does not visit $G(j)$. By $S \stackrel{J}{\hookrightarrow} S'$ we mean that every state in $S'$ is reachable from a state in $S$ via a finite run $\rho$ that fulfills $J$. We have the following lemma analogous to Lemma~\ref{lem:Buchi-connecting-dots}.

\begin{lemma}[\cite{Safra/92/Exponential}]
\label{lem:Streett-connecting-dots}
Let $\tilde{\varrho}=\tilde{q}_{0}, \tilde{q}_{1}, \ldots \in (2^{Q})^{\omega}$ and $(S_{i})_{\omega}=S_{0}, S_{1}, \ldots$ an infinite subsequence of $\tilde{\varrho}$ such that for every $i \ge 0$, $S_{i} \stackrel{J}{\hookrightarrow} S_{i+1}$, then there is an accepting run $\varrho=q_{0}, q_{1}, \ldots \in Q^{\omega}$ of $\calA$ such that $q_{i} \in \tilde{q}_{i}$ for every $i \ge 0$.
\end{lemma}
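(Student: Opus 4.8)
The plan is to run the same K\"{o}nig's-lemma argument that establishes Lemma~\ref{lem:Buchi-connecting-dots}, with ``a finite run visits $F$'' replaced everywhere by ``a finite run fulfills $J$'', and then to check that this stronger property of the segments is exactly what makes $J$ a witness set for the run produced. To set up, write $S_i=\tilde q_{n_i}$ for the increasing index sequence $n_0<n_1<\cdots$ and assume each $S_i$ is nonempty (otherwise $S_{i-1}\stackrel{J}{\hookrightarrow}S_i$ fails, unless all $S_i$ are empty, in which case there is nothing to prove). As in the B\"{u}chi case I will use that $\tilde\varrho$ refines the subset construction, so that the finite runs of $\calA$ witnessing $S_i\stackrel{J}{\hookrightarrow}S_{i+1}$ pass through the sets of $\tilde\varrho$ and every state of $\tilde q_m$ is reachable from $Q_0$ over $w(0)\cdots w(m-1)$; hence any chain of such witnessing runs, suitably prolonged at the front, yields a run $\varrho$ of $\calA$ with $\varrho(m)\in\tilde q_m$ for every $m$.

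First I would build the layered \emph{witness graph} $H$: its layer-$i$ vertices are the states in $S_i$, and there is an edge from $q$ at layer $i$ to $q'$ at layer $i+1$ precisely when some finite run of $\calA$ over $w(n_i)\cdots w(n_{i+1}-1)$ goes from $q$ to $q'$ and fulfills $J$; fix one such run $\rho_i^{q,q'}$ per edge. The hypothesis says exactly that every layer-$(i+1)$ vertex ($i\ge 0$) has an incoming edge. Consider the tree $\calT$ of all finite $H$-paths starting at layer $0$, ordered by prefix: it is finitely branching because each $S_i$ is finite, and it contains a path of length $m$ for every $m$ (start at any $q'\in S_m$ and follow incoming edges backwards down to layer $0$), hence is infinite; by K\"{o}nig's lemma it has an infinite branch $p_0,p_1,p_2,\ldots$ with $p_i\in S_i$ and an $H$-edge from $p_i$ to $p_{i+1}$ for each $i$. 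Concatenating $\rho_0^{p_0,p_1},\rho_1^{p_1,p_2},\ldots$ and prepending a finite run of $\calA$ from a state of $Q_0$ to $p_0$ over $w(0)\cdots w(n_0-1)$ produces an infinite run $\varrho$ of $\calA$ over $w$; setting $q_i:=\varrho(i)$ we get $q_i\in\tilde q_i$ for all $i$, with $q_{n_i}=p_i$.

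It then remains to verify that $\varrho$ is accepting, and here the choice of edges pays off. For $j\in J$: each of the infinitely many segments $\rho_i^{p_i,p_{i+1}}$ visits $B(j)$, so $\varrho$ meets $B(j)$ infinitely often and, $B(j)$ being finite, $\Inf(\varrho)\cap B(j)\neq\emptyset$. For $j\in I\setminus J$: no segment visits $G(j)$, so $\varrho$ meets $G(j)$ only inside its finite prefix, whence $\Inf(\varrho)\cap G(j)=\emptyset$. Therefore, for every $i\in I$, $\Inf(\varrho)\cap G(i)\neq\emptyset$ forces $i\in J$ and hence $\Inf(\varrho)\cap B(i)\neq\emptyset$; that is, $\varrho$ satisfies $\langle G,B\rangle_I$. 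The only genuine obstacle --- just as for Lemma~\ref{lem:Buchi-connecting-dots} --- is the bookkeeping that legitimizes the appeal to K\"{o}nig's lemma, namely that $H$ is finitely branching and has arbitrarily long paths out of layer $0$ (the latter being where the hypothesis $S_i\stackrel{J}{\hookrightarrow}S_{i+1}$ is actually consumed); the Streett-specific ingredient, the negative obligations $I\setminus J$, comes essentially for free, since each glued segment avoids every $G(j)$, $j\notin J$, outright.
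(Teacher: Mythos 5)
Your proof is correct and is exactly the argument the paper intends: the paper omits the proof and attributes the lemma to Safra, describing it as the Streett analogue of Lemma~\ref{lem:Buchi-connecting-dots}, ``a consequence of K\"{o}nig's lemma,'' which is precisely your layered-graph/K\"{o}nig construction with ``visits $F$'' replaced by ``fulfills $J$'' and the extra (essentially free) observation that the glued segments avoid every $G(j)$ for $j\notin J$. The only nitpick is your parenthetical on empty layers --- if $S_i=\emptyset$ then $S_{i-1}\stackrel{J}{\hookrightarrow}S_i$ holds vacuously rather than failing (it is $S_i\stackrel{J}{\hookrightarrow}S_{i+1}$ for nonempty $S_{i+1}$ that would fail) --- but this does not affect the argument.
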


The ingenuity of Safra construction is to efficiently organize information in a type of tree structures, now referred to as Safra trees. In a Safra tree for Streett determinization, each node is labeled by a witness set (index label) and a state set (state label). We simply say that a node \emph{contains} the states in its state label. The standard subset construction is carried out on all nodes in parallel. Each node $v$ with witness set $J$ tracks runs that fulfill $J$. We say that $v$ turns $\green$ when $J$ is fulfilled, and we have $S \stackrel{J}{\hookrightarrow} S'$ where $S$ and $S'$ are the state labels of $v$ at two consecutive moments of turning green. If $v$ turns green infinitely often, then we have a desired $(S_{i})_{\omega}$ as stated in Lemma~\ref{lem:Streett-connecting-dots}. We illustrate this idea by an example. Let $k=3$ and $I=[1..3]$.

The initial state in the deterministic automaton $\calB$ is a root $v$ labeled with $I$ and $Q_{0}$. The obligation of $v$ is to detect runs that fulfills $I$. Once a run visits $B(3)$ (fulfilling a positive obligation), the run moves to a new child, waiting to visit $B(2)$, and once the run visits $B(2)$, the run moves to anther new child, waiting to visit $B(1)$, and so on. Technically speaking, states are moved around, which induces moving of runs.

This \emph{sequential sweeping} can be stalled because some runs in $v$, from some point on, could never visit $B(3)$. So $v$ should spawn a child $v_{3}$ with the witness set $I \setminus \{3\}=\{1,2\}$, to detect runs that fulfill $\{1,2\}$ (see  in Figure~\ref{fig:STS-witness-set-label}). A run in $v_{3}$ should never visit $G(3)$, its negative obligation. Or the run will be reset (the exact meaning of reset is shown in Step~\eqref{en:STS-expansion-nonleaf-Gs}). If a run fulfills $I$, then the run moves into a special child that also has $I$ as its witness set. If all children of $v$ are special, then we say $v$ turns $\green$. It is not hard to verify that if $S$ and $S'$ are the state sets of $v$ at any two consecutive moments when $v$ turns green, then we have $S \stackrel{J}{\hookrightarrow} S'$. But $v$ may never turn green because not all runs fulfill $I$. Thus the special children of $v$ also have $I$ as their witness sets, for they also spawn children, behaving just as $v$. This spawning should be recursively applied until we arrive at leaves whose witness sets are singletons. We refer the reader to~\cite{Safra/92/Exponential,Schwoon/2002/Determinization,Piterman/06/From} for a detailed exposition of this spawning and sweeping process. Figure~\ref{fig:witness-set-illustration} shows part of a Safra tree for Streett determinization in two equivalent representations (with respect to witness sets). The representation shown in Figure~\ref{fig:STS-index-label} is used in Definition~\ref{def:STS}.

\begin{figure}[t!]
\centering
\subfloat[][]{%
\label{fig:STS-witness-set-label}%
\begin{tikzpicture}[scale=0.60]
\Tree [.$v:\{1,2,3\}$
        [.$v_{3}:\{1,2\}$
          [.$v_{32}:\{1\}$ ]
          [.$\{2\}$ ]
          [.$\{1,2,3\}$ ]
        ]
        [.$v_{2}:\{1,3\}$
          [.$\{1\}$ ]
          [.$\{3\}$ ]
          [.$\{1,2,3\}$ ]
        ]
        [.$v_{1}:\{2,3\}$
          [.$\{2\}$ ]
          [.$\{3\}$ ]
          [.$\{1,2,3\}$ ]
        ]
        [.$v_{0}:\{1,2,3\}$
          [.$\{1,2\}$ ]
          [.$\{1,3\}$ ]
          [.$\{2,3\}$ ]
          [.$\{1,2,3\}$ ]
        ]
      ]
\end{tikzpicture}
}%
\hspace{10pt}%
\subfloat[][]{%
\label{fig:STS-index-label}%
\begin{tikzpicture}[scale=0.60]
\Tree [.$v:0$
        [.$v_{3}:3$
          [.$v_{32}:2$ ]
          [.$1$ ]
          [.$0$ ]
        ]
        [.$v_{2}:2$
          [.$3$ ]
          [.$1$ ]
          [.$0$ ]
        ]
        [.$v_{1}:1$
          [.$3$ ]
          [.$2$ ]
          [.$0$ ]
        ]
        [.$v_{0}:0$
          [.$3$ ]
          [.$2$ ]
          [.$1$ ]
          [.$0$ ]
        ]
      ]
\end{tikzpicture}
}%
\caption[Two representations of a Safra tree for Streett determinization]{Two equivalent representations of witness sets.
In~\subref{fig:STS-witness-set-label}, nodes are explicitly labeled with witness sets. In~\subref{fig:STS-index-label}, witness sets are implicitly represented by index labels. In this representation, the witness set of a node is the set of indices that do not appear on the path from the root to the node.}
\label{fig:witness-set-illustration}
\end{figure}

A key ingredient in Safra's construction is to assign names and colors to nodes, in order to track changes on nodes to identify those that turn green infinitely many times. In this paper, use three colors: $\green$, $\red$ and $\yellow$.
\begin{definition}[Safra Trees for Streett Determinization ($\STS$)]
\label{def:STS}
A \emph{Safra tree for Streett determinization} ($\STS$) is a labeled tree $\langle V, L \rangle$ with $L=\langle L_{n}, L_{s}, L_{c}, L_{h}\rangle$ where
\begin{enumerate}[label=\ref{def:STS}.\arabic*,ref=\ref{def:STS}.\arabic*]
\item\label{en:STS-name} $L_{n}: V \to [1..nk]$ assigns each node a unique name.
\item\label{en:STS-set-label} $L_{s}: V \to 2^{Q}$ assigns each node a subset of $Q$ such that for every node $v$, $L_{s}(v)=\cup_{v' \in \ch(v)} L_{s}(v')$ and $L_{s}(v') \cap L_{s}(v'') = \emptyset$ for every two distinct $v', v'' \in \ch(v)$.
\item\label{en:STS-color} $L_{c}: V \to \{\green, \red, \yellow\}$ assigns each node a color.
\item\label{en:STS-index-label} $L_{h}: V \to I \cup \{0\}$ assigns each node an index in $I \cup \{0\}$. For a node $v$, let $L^{\to}_{h}(v)$ denote the sequence of $I$-elements from the root to $v$ with $0$ excluded and $L^{\set}_{h}(v)$ the set of $I$-elements occurring in $L^{\to}_{h}(v)$. We require that for every node $v$, there is no repeated index in the sequence $L^{\to}_{h}(v)$.
\end{enumerate}
\end{definition}
More precisely, an $\STS$ is a labeled and ordered tree; nodes are partially ordered by \emph{older-than} relation. In the tree transformation (see Procedure~\ref{pro:Safra-Streett-determinization}), a newly added node is considered \emph{younger} than its existing siblings. We choose not to define the ordering formally as the meaning is clear from the context. Also, for the sake of presentation clarity, we separate the following naming and coloring convention from the core construction (Step~\ref{en:STS-delta} of Procedure~\ref{pro:Safra-Streett-determinization}).
\begin{Rule}[Naming and Coloring Convention]
\label{rule:ST-naming-coloring}
(1) newly created nodes are marked $\red$, (2) nodes whose all descendants are removed are marked $\green$, (3) nodes are marked $\yellow$ unless they have been marked $\red$ or $\green$, (4) nodes with the empty state label are removed, and (5) when a node is created, a name from the pool of unused name is selected and assigned to the node; when a node is removed, its name is recycled back to the pool.
\end{Rule}

\begin{procedure}[Streett Determinization~\cite{Safra/92/Exponential}]
\label{pro:Safra-Streett-determinization}
Let $\calA=\langle Q, Q_{0}, \Sigma, \Delta, \langle G, B \rangle_{I} \rangle$ ($I=[1..k]$) be a nondeterministic Streett automaton. The following procedure outputs a deterministic Rabin automaton $\calB=\langle \tilde{Q}, \tilde{q}_{0}, \Sigma, \tilde{\Delta}, [\tilde{G}, \tilde{B}]_{\tilde{I}} \rangle$ ($\tilde{I}=[1..nk]$) such that
\begin{enumerate}[label=\ref{pro:Safra-Streett-determinization}.\arabic*,ref=\ref{pro:Safra-Streett-determinization}.\arabic*]
\item \label{en:STS} $\tilde{Q}$ is the set of $\STS$.
\item \label{en:STS-initial} $\tilde{q}_{0} \in \tilde{Q}$ is the tree with just the root $v$ such that $L_{n}(v)=1$, $L_{s}(v)=Q_{0}$, $L_{c}(v)=\red$ and $L_{h}(v)=0$.
\item \label{en:STS-delta} $\tilde{q}'=\tilde{\Delta}(\tilde{q})$ is defined the $\STS$ obtained by applying the following transformation rule to $\tilde{q}$.
\begin{enumerate}[label=\ref{en:STS-delta}.\arabic*,ref=\ref{en:STS-delta}.\arabic*]
\item \label{en:STS-subset-construction} \emph{Subset Construction}: for each node $v$ in $\tilde{Q}$, update state label $L_{s}(v)$ to $\Delta(L_{s}(v))$.
\item \label{en:STS-expansion} \emph{Expansion}. Apply the following transformations \emph{downwards} from the root:
\begin{enumerate}[label=\ref{en:STS-expansion}.\arabic*,ref=\ref{en:STS-expansion}.\arabic*]
\item \label{en:STS-expansion-leaf-empty} If $v$ is a leaf with $L^{\set}_{h}(v) = I$, then stop.
\item \label{en:STS-expansion-leaf-nonempty} If $v$ is a leaf with $L^{\set}_{h}(v) \not = I$, then add a child $v'$ to $v$ such that $L_{s}(v')=L_{s}(v)$ and $L_{h}(v')=\max(I \setminus L^{\set}_{h}(v))$.
\item \label{en:STS-expansion-nonleaf} If $v$ is a node with $j$ children $v_{1}, \ldots, v_{j}$, then let $i_{1}, \ldots, i_{j}$ be the corresponding index labels, and consider the following two cases for each $j' \in [1..j]$.
\begin{enumerate}[label=\ref{en:STS-expansion-nonleaf}.\arabic*,ref=\ref{en:STS-expansion-nonleaf}.\arabic*]
\item \label{en:STS-expansion-nonleaf-Bs} If $L_{s}(v_{j'}) \cap B(i_{j'}) \not = \emptyset$, then add a child $v'$ to $v$ with $L_{s}(v')=L_{s}(v_{j'}) \cap B(i_{j'})$ and $L_{h}(v')=\max([0..i_{j'}) \cap ((I \cup \{0\}) \setminus L^{\set}_{h}(v)))$, and remove the states in $L_{s}(v_{j'}) \cap B(i_{j'})$ from $v_{j'}$ and all the descendants of $v_{j'}$.
\item \label{en:STS-expansion-nonleaf-Gs} If $L_{s}(v_{j'}) \cap B(i_{j'}) = \emptyset$ and $L_{s}(v_{j'}) \cap G(i_{j'}) \not = \emptyset$, then add a child $v'$ to $v$ with $L_{s}(v')=L_{s}(v_{j'}) \cap G(i_{j'})$ and $L_{h}(v')=L_{h}(v)$, and remove the states in $L_{s}(v_{j'}) \cap G(i_{j'})$ from $v_{j'}$ and all the descendants of $v_{j'}$.
\end{enumerate}
\end{enumerate}
\item \label{en:STS-h-merge} \emph{Horizontal Merge}. For any state $q$ and any two siblings $v$ and $v'$ such that $q \in L_{s}(v) \cap L_{s}(v')$, if $L_{h}(v) < L_{h} (v')$, or $L_{h}(v) = L_{h} (v')$ and $v$ is \emph{older than} $v'$, then remove $q$ from $v'$ and all its descendants.
\item \label{en:STS-v-merge} \emph{Vertical Merge}. For each $v$, if all children of $v$ have index label $0$, then remove all descendants of $v$.
\end{enumerate}
\item \label{en:STS-accepance} $[\tilde{G}, \tilde{B}]_{\tilde{I}}$ is such that for every $i \in \tilde{I}$
\begin{align*}
\tilde{G}(i)&=\{ \tilde{q} \in \tilde{Q} \mid \tilde{q} \textit{ contains a red node with name $i$ or does not contains a node with name $i$} \, \} \\
\tilde{B}(i)&=\{ \tilde{q} \in \tilde{Q} \mid \tilde{q} \textit{ contains a green node with name $i$} \, \}
\end{align*}
\end{enumerate}
\end{procedure}

Step~\eqref{en:STS-expansion-nonleaf-Bs} says that if a run in $v_{j'}$ fulfills the positive obligation $i_{j'}$ by visiting $B(i_{j'})$ ($L_{s}(v_{j'}) \cap B(i_{j'}) \not = \emptyset$), then the run moves into a new node $v'$ ($L_{s}(v')=L_{s}(v_{j'}) \cap B(i_{j'})$), and $v'$ continues to monitor if the run hits the next largest positive obligation in the witness set of its parent ($L_{h}(v')=\max([0..i_{j'}) \cap ((I \cup \{0\}) \setminus L^{\set}_{h}(v)))$). Step~\eqref{en:STS-expansion-nonleaf-Gs} says that if this is not the case and the run in $v_{j'}$ also violates the negative obligation $i_{j'}$ by visiting $G(i_{j'})$ ($L_{s}(v_{j'}) \cap G(i_{j'}) \not = \emptyset$), then this run is \emph{reset}, in the sense that the states in $L_{s}(v_{j'}) \cap G(i_{j'})$ moved into a new child of $v$. 
\section{Improved Streett Determinization}
\label{sec:improved-Streett-determinization}

In this section, we show our improved construction for Streett determinization. Define $\mu=\min(n,k)$.

\subsection{Improvement I}
The first idea is what we have applied to Streett complementation~\cite{Cai+Zhang/11/Tight+Upper}, namely, the larger the $k$, the more overlaps between $B(i)$'s and between $G(i)$'s ($i \in I$). Let us revisit the previous example, illustrated in Figure~\ref{fig:witness-set-illustration}. Assume that $G(2) \subseteq G(3)$ (we say that $G(3)$ covers $G(2)$). If a run stays at $v_{3}$, then the run is not supposed to visit $G(3)$, or otherwise the run should have been reset by Step~\eqref{en:STS-expansion-nonleaf-Gs}. Since the run cannot visit $G(2)$ either, there is no point to check if it is to visit $B(2)$, and hence $v_{3}$ does not need to have a child with index label $2$ (in this case the node $v_{32}$). This simple idea already puts a cap on the size of $\STS$. But it turns out that we can save the most if we exploit the redundancy on $B$ instead of on $G$.

\paragraph{Reduction of Tree Size.}
Step~\eqref{en:STS-expansion-nonleaf-Bs} at a non-leaf node $v$ is to check, for every child $v'$ of $v$, if a run visits $B(L_{h}(v'))$, and in the positive case, move the run into a new node. Let $v'$ be a non-root node and $v$ the parent of $v'$. Let $I_{v}=L^{\set}_{h}(v)$. As Step~\eqref{en:STS-expansion} is executed recursively from top to bottom, it can be assured that at the moment of its arriving at $v$, for every $i \in I_{v}$, we have $L_{s}(v') \cap B(i) = \emptyset$. By an abuse of notation, we write $B(v)$ for $\cup_{j \in I_{v}} B(j)$, and hence we have $L_{s}(v') \cap B(v) = \emptyset$. Thus, there is no chance of missing a positive obligation even if we restrict $L_{h}$ to be such that $B(L_{h}(v')) \not \subseteq B(v)$. It follows that each node ``watches'' at least one more state that has not been watched by its ancestors, and therefore along any path of an $\STS$, there are at most $\mu$ nodes with non-zero index labels (recall that $\mu=\min(n,k)$ and note that the root is excluded as its index label is $0$). Also, it can be shown by induction on tree height that an $\STS$ contains at most $n$ nodes with index label $0$, using the fact that state labels of sibling nodes are pairwise disjoint and the fact that if $v$ is the parent of $v'$ and $L_{h}(v')=0$, then $L_{s}(v') \subset L_{s}(v)$. Therefore, the number of nodes in an $\STS$ is bounded by $n(\mu+1)$.

\paragraph{Reduction of Index Labels.}
Let $I'_{v}=\{i \in I \mid B(i) \subseteq B(v) \}$. The above analysis tells us that $L_{h}(v') \in (I \setminus I'_{v})$. However, we can further improve $L_{h}$ such that there is no $j \in (I \setminus I'_{v})$, $(B(j) \setminus B(v)) \subset (B(L_{h}(v')) \setminus B(v))$ and for any $j \in (I \setminus I'_{v})$, $(B(j) \setminus B(v)) = (B(L_{h}(v')) \setminus B(v))$ implies $L_{h}(v') < j$. We say that $L_{h}(v')$ minimally extends $L^{\to}_{h}(v)$ if this condition holds.

To formalize the intuition of \emph{minimal extension}, we introduce two functions $\Cover: I^{*} \to 2^{I}$ and $\Mini: I^{*} \to 2^{I}$ as in~\cite{Cai+Zhang/11/Tight+Upper}. $\Cover$ maps finite sequences of $I$-elements to subsets of $I$ such that
\begin{align*}
\Cover(\alpha) = \{\, j \in I \ \mid \ B(j) \subseteq \bigcup_{i=1}^{|\alpha|} B(\alpha[i]) \, \} .
\end{align*}
Note that $\Cover(\epsilon)=\emptyset$. $\Mini$ also maps finite sequences of $I$-elements to subsets of $I$ such that $j \in \Mini(\alpha)$ if and only if $j \in I \setminus \Cover(\alpha)$ and
\begin{align}
& \forall j' \in I \setminus \Cover(\alpha) \, \Big[ \, j' \not = j \ \to \
 \neg \,\Big(\, B(j') \cup \bigcup_{i=1}^{|\alpha|} B(\alpha[i]) \ \subseteq \ B(j) \cup \bigcup_{i=1}^{|\alpha|} B(\alpha[i]) \,\Big)\, \Big] \, , \label{eq:mini-1} \\
& \forall j' \in I \setminus \Cover(\alpha) \, \Big[ \, j' < j \ \to \
 \,\Big(\, B(j') \cup \bigcup_{i=1}^{|\alpha|} B(\alpha[i]) \ \not = \ B(j) \cup \bigcup_{i=1}^{|\alpha|} B(\alpha[i]) \,\Big)\, \Big] \, .\label{eq:mini-2}
\end{align}
$\Mini(\alpha)$ consists of index candidates to \emph{minimally} enlarge $\Cover(\alpha)$; ties (with respect to set inclusion) are broken by numeric minimality (Condition~\eqref{eq:mini-2}).

$\Mini$ plays a crucial role in reducing the combination of index labels. In the reduced Safra's trees (Definition~\ref{def:RSTS}), we identify a collection of paths such that each node appears on exactly one of those paths. The sequence of index labels on each of those paths corresponds to a path in a specific tree structure, which we refer to as \emph{increasing tree of sets} ($\ITS$)~\cite{Cai+Zhang/11/Tight+Upper}. Counting the number of paths in $\ITS$ give us a better upper bound on the combination of index labels. Here we switch to an informal notation of labeled trees and we identify a node with the sequence of labels from the root to the node.

\begin{definition}[Increasing Tree of Sets ($\ITS$)~\cite{Cai+Zhang/11/Tight+Upper}]
\label{def:ITS} An $\ITS$ $\calT(n,k,B)$ is an unordered $I$-labeled tree such that a node $\alpha$ exists in $\calT(n,k,B)$ if and only if $\forall i \in [1..|\alpha|]$, $\alpha[i] \in \Mini(\alpha[1..i))$.
\end{definition}

Several properties are easily seen from the definition. First, an $\ITS$ is uniquely determined by parameter $n$, $k$ and $B$. Second, the length of the longest path in $\calT(n,k,B)$ is bound by $\mu$. Third, if $\beta$ is a direct child of $\alpha$, then $\beta$ must contribute at least one new element that has not been seen from the root to $\alpha$. Forth, the new contributions made by $\beta$ cannot be covered by contributions made by another sibling $\beta'$, with ties broken by selecting the one with smallest index. As $B:I \to 2^{Q}$ is one to one, we also view $\ITS$ as $2^{Q}$-labeled trees.

\begin{figure}[t!]
\begin{center}
\begin{tikzpicture}[scale=0.65]
\Tree [.$0:\emptyset$
        [.$2:\{q_{0}\}$
          [.$1:\{q_{0},q_{1}\}$
            [.$3:\{q_{1},q_{2}\}$ ]
          ]
          [.$4:\{q_{2}\}$
            [.$1:\{q_{0},q_{1}\}$ ]
          ]
        ]
        [.$4:\{q_{2}\}$
          [.$3:\{q_{1},q_{2}\}$
            [.$2:\{q_{0}\}$ ]
          ]
          [.$2:\{q_{0}\}$
            [.$1:\{q_{0},q_{1}\}$ ]
          ]
        ]
      ]
\end{tikzpicture}
\end{center}
\caption{\textrm{The $\ITS$ $\calT(n,k,B)$ in Example~\ref{ex:ITS}. Note that $\{q_{0},q_{1}\}$ and $\{q_{1},q_{2}\}$ cannot appear in the first level because $\{q_{0},q_{1}\}$ covers $\{q_{0}\}$ and $\{q_{1},q_{2}\}$ covers $\{q_{2}\}$. The leftmost node at the bottom level is labeled by $\{q_{1},q_{2}\}$ instead of by $\{q_{2}\}$ due to the index minimality requirement}.}
\label{fig:ITS}
\end{figure}

\begin{example}[$\ITS$]
\label{ex:ITS}
Consider $n=3$, $k=4$, $Q=\{q_{0},q_{1},q_{2}\}$, and $B:[1..4] \to 2^{Q}$ such that
\begin{align*}
B(1)=\{q_{0}, q_{1}\}, && B(2)=\{q_{0}\}, && B(3)=\{q_{1},q_{2}\}, && B(4)=\{q_{2}\} \,.
\end{align*}
Figure~\ref{fig:ITS} shows the corresponding $\calT(n,k,B)$.
\end{example}

\subsection{Improvement II}

The second idea is a batch-mode naming scheme to reduce name combinations. As shown before, an $\STS$ can have $n(\mu+1)$ nodes, which translates to $(n(\mu+1))!$ name combinations according to the current ``first-come-first-serve'' naming scheme, that is, picking an unused name when a new node is created and recycling the name when a node is removed. When $k=O(n)$, the naming cost is higher than all other complexity factors combined, as $(n(\mu+1))!=2^{O(nk \lg nk)}$. However, this can be overcome by dividing the name space into even buckets and ``wholesaling'' buckets to specific paths in an $\STS$.

\paragraph{Reduction of Names.}
Let $t$ be an $\STS$. A \emph{left spine} ($\LS$) is a maximal path $l=v_{1} \cdots v_{m}$ such that $v_{m}$ is a leaf, for any $i \in [2..m]$, $v_{i}$ is the left-most child of $v_{i-1}$, and $v_{1}$ is not a left-most child of its parent. We call $v_{1}$ the head of $l$. Let $\head: V \to V$ be such that $\head(v)=v'$ if $v'$ is the head of the $\LS$ where $v$ belongs to. We say that $l$ is the $i$-th $\LS$ of $t$ if $v_{m}$ is the $i$-th leaf of $t$, counting from left to right. It is clear that a tree with $m$ leaves has $m$ $\LS$, and every node is on exactly one $\LS$. Thus, an $\STS$ has at most $n$ $\LS$. For this renaming scheme to work, we require that a newly created node be added to the right of all existing siblings of the same $L_{h}$-value. If a non-head node on an $\LS$ has index label $0$, then so should all of its siblings. But then all of them should have been removed. Therefore, only the head of an $\LS$ can have index label $0$, which means that an $\LS$ can have at most $\mu+1$ nodes.

We use $n(\mu+1)$ names and divide them evenly into $n$ buckets. Let $b_{i}$ ($i \in [1..n]$) denote the $i$-th bucket, $b_{ij}=(\mu+1)(i-1)+j$ ($i \in [1..n], j \in [1..\mu+1]$) the $j$-th name in the $i$-th bucket. We say that $b_{i1}=(\mu+1)(i-1)+1$ is the \emph{initial value} of $b_{i}$. Our naming strategy is as follows. Every $\LS$ $l=v_{1}\cdots v_{m}$ in an $\STS$ is associated with a name bucket $b$ and $v_{1}, \cdots, v_{m}$ are assigned names continuously from the initial value of $b$. For example, if $l$ is associated with bucket $b_{t}$, then $L_{n}(v_{i})=(\mu+1)(t-1)+i$ for $i \in [1..m]$. The bucket association for each $\LS$ in an $\STS$ $t$ can be viewed as selection function $\bucket: V \to [1..n]$ such that $\bucket(v)=i$ if node $v$ is assigned a name in the $i$-th bucket.

This naming strategy, however, comes with a complication; what if a leftmost sibling $v$ is removed (due to Step~\eqref{en:STS-expansion-nonleaf}), and the second leftmost sibling $v'$ (if exists) and all nodes belonging to the $\LS$ of which $v'$ is the head, ``graft into'' the $\LS$ that $v$ belongs to? The answer is that at the end of tree transformation, we need to rename those nodes that have moved into another $\LS$. If a tree transformation turns $t$ into $t'$, and during the process a node $v$ joins another $\LS$ $l$ in $t'$ (which is also in $t$ before the transformation), then we rename $v$ to a name in the bucket that $l$ uses in $t$, and recycle the bucket with which $v$ was associated in $t$.

\begin{example}[New Naming Scheme]
\label{ex:RSTS-naming}
Figure~\ref{fig:STS-naming} illustrates the changes of names in a sequence of tree transformations. We assume that there are $10$ buckets $b_{1}-b_{10}$, each of which is of size $4$. Nodes in the graphs are denoted in the form $v:L_{n}(v)$; all other types of labels are omitted for simplicity.
\end{example}

\begin{figure}[ht!]%
\centering
\subfloat[][]{%
\label{fig:STS-naming-a}%
\begin{tikzpicture}[scale=0.55]
\Tree [.$v_{0}:1$
        [.$v_{1}:2$
          [.$v_{2}:3$
            [.$v_{3}:4$ ]
            [.$v_{4}:17$ ]
          ]
          [.$v_{5}:25$
            [.$v_{6}:26$ ]
            [.$v_{7}:5$ ]
          ]
        ]
        [.$v_{8}:21$
          [.$v_{9}:22$ ]
          [.$v_{10}:33$ ]
        ]
        [.$v_{11}:13$ ]
      ]
\end{tikzpicture}
}%
\hspace{3pt}%
\subfloat[][]{%
\label{fig:STS-naming-b}%
\begin{tikzpicture}[scale=0.55]
\Tree [.$v_{0}:1$
        [.$v_{1}:2$
          [.$v_{5}:3$
            [.$v_{6}:4$ ]
            [.$v_{7}:5$ ]
          ]
        ]
        [.$v_{8}:21$
          [.$v_{9}:22$ ]
          [.$v_{10}:33$ ]
        ]
        [.$v_{11}:13$ ]
      ]
\end{tikzpicture}
} \\
\subfloat[][]{%
\label{fig:STS-naming-c}%
\begin{tikzpicture}[scale=0.55]
\Tree [.$v_{0}:1$
        [.$v_{1}:2$
          [.$v_{5}:3$
            [.$v_{6}:4$ ]
            [.$v_{7}:5$ ]
          ]
          [.$v_{12}:25$
            [.$v_{13}:26$ ]
          ]
        ]
        [.$v_{8}:21$
          [.$v_{9}:22$ ]
          [.$v_{10}:33$ ]
          [.$v_{14}:29$ ]
        ]
        [.$v_{11}:13$ ]
      ]
\end{tikzpicture}
}%
\hspace{3pt}%
\subfloat[][]{%
\label{fig:STS-naming-d}%
\begin{tikzpicture}[scale=0.55]
\Tree [.$v_{0}:1$
        [.$v_{8}:2$
          [.$v_{9}:3$ ]
          [.$v_{10}:33$ ]
          [.$v_{14}:29$ ]
        ]
        [.$v_{11}:13$ ]
      ]
\end{tikzpicture}
}
\caption[Renaming of an $\STS$]{
\subref{fig:STS-naming-a} shows an $\STS$ with $7$ $\LS$: $l_{1}:v_{0}v_{1}v_{2}v_{3}$, $l_{2}:v_{4}$, $l_{3}:v_{5}v_{6}$, $l_{4}:v_{7}$, $l_{5}:v_{8}v_{9}$, $l_{6}:v_{10}$ and $l_{7}:v_{11}$, associated with buckets $b_{1}$, $b_{5}$, $b_{7}$, $b_{2}$, $b_{6}$, $b_{9}$ and $b_{4}$, respectively.
\subref{fig:STS-naming-b} shows the resulting $\STS$ after removing $v_{2}$ and its descendants. Nodes $v_{5}$ and $v_{6}$ migrate into $l_{1}$, and accordingly their name bucket $b_{7}$ is recycled. Also recycled is bucket $b_{5}$ due to the deletion of $v_{4}$.
\subref{fig:STS-naming-c} shows the resulting $\STS$ after adding nodes $v_{12}$, $v_{13}$ and $v_{14}$ and forming two new $\LS$: $v_{12}v_{13}$ and $v_{14}$. Here $v_{12}$ and $v_{13}$ reuse the previously recycled bucket $b_{7}$, while node $v_{14}$ takes an unused bucket $b_{8}$.
\subref{fig:STS-naming-d} shows the resulting $\STS$ after removing $v_{1}$ and its descendants. Nodes $v_{8}$ and $v_{9}$ migrate into $l_{1}$ and take names $2$ and $3$, respectively. Buckets $b_{2}$, $b_{7}$ and $b_{6}$ are recycled accordingly.}
\label{fig:STS-naming}%
\end{figure}

\begin{definition}[Reduced Safra Trees for Streett Determinization ($\RSTS$)]
\label{def:RSTS}
A \emph{reduced Safra tree for Streett determinization} ($\RSTS$) is an $\STS$ $\langle V, L \rangle$ with $L=\langle L_{n}, L_{s}, L_{c}, L_{h} \rangle$ that satisfies the following additional conditions:
\begin{enumerate}[label=\ref{def:RSTS}.\arabic*,ref=\ref{def:RSTS}.\arabic*]
\item\label{en:RSTS-Lh-extra} \emph{Condition on $L_{h}$}. For each node $v$, if $\Mini(L^{\to}_{h}(v))\not = \emptyset$, then $v$ is not a leaf node and for any child $v'$ of $v$, $L_{h}(v') \in \Mini(L^{\to}_{h}(v))$.
\item\label{en:RSTS-Ln-extra} \emph{Condition on $L_{n}$}. There exists a function $\bucket: V \to [1..n]$ such that for every $\LS$ $v_{1} \cdots v_{m}$, we have $\bucket(v_{i})=\bucket(v_{j})$ for $i,j \in [1..m]$ and $L_{n}(v_{i}) = (\mu+1)(\bucket(v_{i})-1)+i$ for $i \in [1..m]$.
\end{enumerate}
\end{definition}
As before, nodes in a $\RSTS$ are partially ordered by \emph{older-than} relation. But we impose an additional \emph{structural ordering} on nodes, that, for any two sibling $v$ and $v'$, $v'$ is placed to the right of $v$ if and only if $L_{h}(v) > L_{h} (v')$, or $L_{h}(v) = L_{h} (v')$ and $v$ is \emph{older than} $v'$. This structural ordering is needed for our renaming scheme (see the proof of Theorem~\ref{thm:Streett-determinization-complexity}).

Condition~\eqref{en:RSTS-Ln-extra}, together with the requirement that $L_{n}$ is injective (Condition~\eqref{en:STS-name}), guarantees that no two distinct $\LS$ in a tree share a bucket. Condition~\eqref{en:RSTS-Lh-extra} says that every $\RSTS$ has fully grown left spines, that is, no leaf $v$ can be further extended, as $\Mini(L^{\to}_{h}(v))=\emptyset$. To achieve this, we need the following procedure applied as the last step of each tree transformation.

\begin{procedure}[Grow Left Spine]
\label{pro:grow-left-spine}
Repeat the following procedure until no new nodes can be added: if $v$ is a leaf and $\Mini(L^{\to}_{h}(v)) \not=\emptyset$, add a new child $v'$ to $v$ with $L_{s}(v')=L_{s}(v)$, $L_{h}(v')=\max(\Mini(L^{\to}_{h}(v)))$, $L_{c}(v')=\red$, and $L_{n}(v')=L_{n}(v)+1$.
\end{procedure}
We note that the requirement that a $\RSTS$ has fully grown left spines is not essential; we can ``grow'' a $\RSTS$ ``on-the-fly'' as in~\cite{Safra/92/Exponential,Piterman/06/From}. But this requirement simplifies the analysis on the number of combinations of index labels (see the proof of Theorem~\ref{thm:Streett-determinization-complexity}).

\begin{Rule}[Naming and Coloring on $\RSTS$]
\label{rule:RSTS-naming}
Naming and Coloring convention for $\RSTS$ is the one for $\STS$ plus the following: (1) nodes in an $\LS$, from the head downwards, are assigned continuously increasing names, starting from the initial value of a bucket; (2) when an $\LS$ is created, nodes in the $\LS$ are assigned names from an unused name bucket; when an $\LS$ is removed (which only happens when its head is removed), the name bucket of the $\LS$ is recycled; (3) when an $\LS$ $l$ is grafted into another $\LS$ $l'$, the name bucket of $l$ is recycled and nodes on $l$ are renamed according to (1), as if they were on $l'$ originally; (4) renamed nodes are marked $\red$.
\end{Rule}

\begin{procedure}[Improved Streett Determinization]
\label{pro:improved-Streett-determinization}
Let $\calA=\langle Q, Q_{0}, \Sigma, \Delta, \langle G, B \rangle_{I} \rangle$ be a nondeterministic Streett automaton. This procedure outputs a deterministic Rabin automaton $\calB=\langle \tilde{Q}, \tilde{q}_{0}, \Sigma, \tilde{\Delta}, [\tilde{G}, \tilde{B}]_{\tilde{I}} \rangle$, where
$\tilde{Q}$ is the set of $\RSTS$, $\tilde{I}=[1..n(\mu+1)]$, $[\tilde{G}, \tilde{B}]_{\tilde{I}}$ is as defined in Procedure~\ref{pro:Safra-Streett-determinization}, $\tilde{q}_{0}$ is a tree that is just a fully grown left spine obtained by growing the single root tree (as defined in Procedure~\ref{pro:Safra-Streett-determinization}) according to Procedure~\eqref{pro:grow-left-spine}, and $\tilde{\Delta}$ is defined such that $\tilde{q}'=\tilde{\Delta}(\tilde{q})$ if and only if $\tilde{q}'$ is the $\RSTS$ obtained by applying the following transformation rule to $\tilde{q}$.
\begin{enumerate}[label=\ref{pro:improved-Streett-determinization}.\arabic*,ref=\ref{pro:improved-Streett-determinization}.\arabic*]
\item \label{en:RSTS-subset-construction} \emph{Subset Construction}. For each node $v$ in $\tilde{Q}$, update state label $L_{s}(v)$ to $\Delta(L_{s}(v))$.
\item \label{en:RSTS-expansion} \emph{Expansion}. Apply the following transformations to non-leaf nodes recursively from the root. Let $v$ be a node with $j$ children $v_{1}, \ldots, v_{j}$ with $i_{1}, \ldots, i_{j}$ as the corresponding index labels. Consider the following cases for each $j' \in [1..j]$.
\begin{enumerate}[label=\ref{en:RSTS-expansion}.\arabic*,ref=\ref{en:RSTS-expansion}.\arabic*]
\item \label{en:RSTS-expansion-Bs} If $L_{s}(v_{j'}) \cap B(i_{j'}) \not = \emptyset$, then add a child $v'$ to $v$ with $L_{s}(v')=L_{s}(v_{j'}) \cap B(i_{j'})$ and $L_{h}(v')=\max([0..i_{j'}) \cap (\{0\} \cup \Mini(L^{\to}_{h}(v))))$, and remove the states in $L_{s}(v_{j'}) \cap B(i_{j'})$ from $v_{j'}$ and all the descendants of $v_{j'}$.
\item \label{en:RSTS-expansion-Gs} If $L_{s}(v_{j'}) \cap B(i_{j'}) = \emptyset$ and $L_{s}(v_{j'}) \cap G(i_{j'}) \not = \emptyset$, then add a child $v'$ to $v$ with $L_{s}(v')=L_{s}(v_{j'}) \cap G(i_{j'})$ and $L_{h}(v')=L_{h}(v)$, and remove the states in $L_{s}(v_{j'}) \cap G(i_{j'})$ from $v_{j'}$ and all the descendants of $v_{j'}$.
\end{enumerate}
\item \label{en:RSTS-h-merge} \emph{Horizontal Merge}. For any state $q$ and any two siblings $v$ and $v'$ such that $q \in L_{s}(v) \cap L_{s}(v')$, if $L_{h}(v) < L_{h} (v')$, or $L_{h}(v) = L_{h} (v')$ and $v$ is \emph{older than} $v'$, then remove $q$ from $v'$ and all its descendants.
\item \label{en:RSTS-v-merge} \emph{Vertical Merge}. For each $v$, if all children of $v$ have index label $0$, then remove all descendants of $v$.
\item \label{en:RSTS-grow-LS} Grow the tree fully according to Procedure~\eqref{pro:grow-left-spine}.
\end{enumerate}
\end{procedure}
Note that nodes are assigned or reassigned names in batch only after the whole expansion phase is carried out and the tree is fully grown. Besides naming and renaming, the only major difference between Procedures~\ref{pro:improved-Streett-determinization} and~\ref{pro:Safra-Streett-determinization} is the way of selecting the next positive obligation. In Step~\eqref{en:RSTS-expansion-Bs}, $\Mini(L^{\to}_{h}(v))$ is used in the calculation, while in Step~\eqref{en:STS-expansion-nonleaf-Bs}, $L^{\set}_{h}(v)$ is used.

\begin{theorem}[Streett Determinization: Correctness]
\label{thm:Streett-determinization-correctness}
Let $\calA=\langle Q, Q_{0}, \Sigma, \Delta, \langle G, B \rangle_{I} \rangle$ be a Streett automaton with $|Q|=n$ and $I=[1..k]$, and $\calB=\langle \tilde{Q}, \tilde{q}_{0}, \Sigma, \tilde{\Delta}, [\tilde{G}, \tilde{B}]_{\tilde{I}} \rangle$ the deterministic Rabin automaton obtained by Procedure~\ref{pro:improved-Streett-determinization}. We have $\scrL(\calA)=\scrL(\calB)$.
\end{theorem}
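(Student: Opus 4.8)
The plan is to prove the two inclusions $\scrL(\calA)\subseteq\scrL(\calB)$ and $\scrL(\calB)\subseteq\scrL(\calA)$ by following the correctness argument for Safra's Procedure~\ref{pro:Safra-Streett-determinization} and carefully tracking the three places where Procedure~\ref{pro:improved-Streett-determinization} differs: the next positive obligation in Step~\eqref{en:RSTS-expansion-Bs} is drawn from $\Mini(L^{\to}_{h}(v))$ rather than from $I\setminus L^{\set}_{h}(v)$; names are dispensed in buckets along left spines and refreshed only by grafting (Rule~\ref{rule:RSTS-naming}); and every reachable tree is a fully grown $\RSTS$ (Step~\eqref{en:RSTS-grow-LS} via Procedure~\eqref{pro:grow-left-spine}). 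As a preliminary I would prove that $\tilde\Delta$ is well defined, i.e.\ that Steps~\eqref{en:RSTS-subset-construction}--\eqref{en:RSTS-grow-LS} send an $\RSTS$ to an $\RSTS$: sibling-label disjointness and $L_{s}(v)=\bigcup_{v'\in\ch(v)}L_{s}(v')$ are preserved exactly as in Safra's construction, Condition~\eqref{en:RSTS-Lh-extra} is restored by Step~\eqref{en:RSTS-grow-LS}, and Condition~\eqref{en:RSTS-Ln-extra} survives because a deletion only lets a second-leftmost child slide into the left spine vacated by a removed leftmost sibling, so every new left spine consumes one fresh bucket and every grafted spine is renamed into its host's bucket. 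A corollary used in both directions is the \emph{persistence fact}: a node keeps its name exactly while it is neither removed, recreated, nor grafted, and recreation or grafting marks it $\red$; hence ``$\tilde q_{i}$ contains a non-$\red$ node named $\nu$ for all large $i$'' pins down a single persistent node $u$ with a stable index-label path and a stable bucket.

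For $\scrL(\calA)\subseteq\scrL(\calB)$, let $\varrho=q_{0}q_{1}\cdots$ be an accepting run of $\calA$ on $w$ with canonical witness set $J=\{\,j\in I\mid\varrho\text{ visits }G(j)\text{ infinitely often}\,\}$, so $\varrho$ visits $B(j)$ infinitely often for $j\in J$ and $G(j)$ only finitely often for $j\notin J$; let $\tilde\varrho=\tilde q_{0}\tilde q_{1}\cdots$ be the unique run of $\calB$ on $w$. Since the subset construction runs in every node, each $q_{i}$ lies in $L_{s}$ of some node of $\tilde q_{i}$, and tracking at each transformation which node a state is moved into (Steps~\eqref{en:RSTS-expansion-Bs}--\eqref{en:RSTS-grow-LS}) yields a sequence of nodes $v^{(i)}\ni q_{i}$ along which a state can only descend to a child, slide to an older or smaller-index sibling, or be reset. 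As an $\RSTS$ has at most $n(\mu+1)$ nodes and $n(\mu+1)$ names, a pigeonhole/K\"onig argument together with the persistence fact produces a name $\nu$, its persistent node $u$, and a time $i_{0}$ such that for $i\ge i_{0}$ the index-label path of $u$ is fixed and $q_{i}$ lies in $u$ or a descendant. The crux is that $u$ turns $\green$ infinitely often; this follows from Safra's reasoning --- confinement of $\varrho$ to $u$'s subtree and the Streett condition force $\varrho$ eventually to stop triggering resets out of $u$'s subtree and to keep fulfilling positive obligations, so its states are repeatedly swept into index-$0$ children of $u$ and collapsed by the vertical merge of Step~\eqref{en:RSTS-v-merge} --- the one adaptation being that the sweep now walks an $\ITS$-path of index labels instead of all of $I\setminus L^{\set}_{h}(u)$; by the covering observation of the ``Reduction of Index Labels'' paragraph (a state inside a child $v'$ of $u$ already avoids $\bigcup_{j\in L^{\set}_{h}(u)}B(j)$, so choosing $L_{h}(v')\in\Mini(L^{\to}_{h}(u))$ misses no positive obligation $\varrho$ can reach) this still exhausts every positive obligation the run can meet. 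Thus $\tilde q_{i}\in\tilde B(\nu)$ infinitely often while $\tilde q_{i}\notin\tilde G(\nu)$ for $i\ge i_{0}$, so $\tilde\varrho$ satisfies the Rabin pair $\nu$ and $w\in\scrL(\calB)$.

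For $\scrL(\calB)\subseteq\scrL(\calA)$, suppose $\tilde\varrho$ is accepting, so there is a name $\nu$ with $\tilde q_{i}\in\tilde G(\nu)$ only finitely often and $\tilde q_{i}\in\tilde B(\nu)$ infinitely often; fix $i_{0}$ past the last occurrence in $\tilde G(\nu)$. By the persistence fact the non-$\red$ nodes named $\nu$ in $\tilde q_{i}$ ($i\ge i_{0}$) are the successive states of one persistent node $u$ with a fixed index-label path, hence a fixed witness set $J=I\setminus L^{\set}_{h}(u)$; moreover $L_{s}(u)\ne\emptyset$ throughout (empty-labelled nodes are removed) and, since a transformation only removes states from an existing node after the subset step, $L_{s}(u_{i+1})\subseteq\Delta(L_{s}(u_{i}))$ for $i\ge i_{0}$. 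Let $i_{0}\le m_{1}<m_{2}<\cdots$ enumerate the times at which $u$ is $\green$, and set $S_{\ell}=L_{s}(u_{m_{\ell}})$. Between $m_{\ell}$ and $m_{\ell+1}$ the subtree below $u$ is rebuilt from scratch and collapses again; this forces every $j\in J$ to be met by $B(j)$ along the runs internal to $u$ (the positive-obligation sweep, now $\Mini$-governed but complete by the same covering remark) while no $j\notin J$ is met by $G(j)$ (otherwise Step~\eqref{en:RSTS-expansion-Gs} would reset and no $\green$ event could occur), which is precisely $S_{\ell}\stackrel{J}{\hookrightarrow}S_{\ell+1}$. Applying Lemma~\ref{lem:Streett-connecting-dots} to the sequence that equals $\Delta^{i}(Q_{0})$ for $i<i_{0}$ and $L_{s}(u_{i})$ for $i\ge i_{0}$, with the distinguished infinite subsequence $(S_{\ell})_{\omega}$, yields an accepting run of $\calA$ on $w$; hence $w\in\scrL(\calA)$.

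I expect the main obstacle to be the two invariant lemmas carrying the new machinery, rather than the Safra-style skeleton of either direction. The first is closure of $\RSTS$ under $\tilde\Delta$ with the bucket function maintained, together with the persistence fact it yields; the delicate case is grafting, where one must check that sliding a left spine into the slot of a deleted leftmost sibling --- with the accompanying batch renaming and $\red$-marking --- neither violates Condition~\eqref{en:RSTS-Ln-extra} nor changes the identity or state label of $u$ when $u$ lies upstream on the host spine. The second is that replacing $I\setminus L^{\set}_{h}(v)$ by $\Mini(L^{\to}_{h}(v))$ in the selection of positive obligations is sound in both directions at once: it must not let the construction miss a positive obligation an accepting run actually visits (for $\subseteq$), and it must still deliver $S_{\ell}\stackrel{J}{\hookrightarrow}S_{\ell+1}$ between consecutive green events (for $\supseteq$). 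Both reduce to the $\Cover$/$\Mini$ facts behind Definition~\ref{def:ITS} and the ``Reduction of Index Labels'' paragraph --- once a run is confined below a node $v$ it avoids $\bigcup_{j\in L^{\set}_{h}(v)}B(j)$ forever, so monitoring $\Mini$-representatives of $I\setminus\Cover(L^{\to}_{h}(v))$ is as informative as monitoring every index --- but distilling these into a statement that plugs cleanly into Lemma~\ref{lem:Streett-connecting-dots} (possibly after restating the lemma relative to the $\Mini$-reduced sweep) is where the real work lies.
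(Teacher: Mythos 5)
Your proposal is correct and follows essentially the same route as the paper: both inclusions reduce to Safra's argument via Lemma~\ref{lem:Streett-connecting-dots}, with the two new ingredients being (i) soundness and completeness of the $\Mini$-based selection of positive obligations and (ii) stability of names under the bucket/grafting scheme (renamed nodes being marked $\red$ is exactly what lets a persistently non-red name pin down a single node in the $\scrL(\calB)\subseteq\scrL(\calA)$ direction). The one concrete point the paper supplies that you leave implicit is why a persistent node is grafted only finitely often---each graft strictly decreases the height of the head of the left spine it joins---which is what guarantees the eventually fixed name in the $\scrL(\calA)\subseteq\scrL(\calB)$ direction.
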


\begin{proof}
($\scrL(\calA) \subseteq \scrL(\calB)$). This part of proof is almost identical to the one in~\cite{Safra/92/Exponential}. We ought to show that if $\varrho=\tilde{q}_{0} \tilde{q}_{1} \cdots$ is an run of $\calB$ over an infinite word $w=w_{0}w_{1} \ldots \in \scrL(\calA)$, then (1) a node $v$ exists in every state in $\varrho$ from some point on, (2) $v$ turns green infinitely often, and (3) $v$ has a fixed name $i \in \tilde{I}$. The argument in~\cite{Safra/92/Exponential} guarantees the existence of such a node $v$ with the first two properties. The only complication comes from renaming. We have the situation that $v$ with name $i$ exists in $\tilde{q}_{j}$, but it is renamed to $i'$ in the following state $\tilde{q}_{j+1}$. This happens when $v$ is on an $\LS$ $l$ whose head is a second left-most sibling in $\tilde{q}_{j}$ and the corresponding leftmost sibling is removed in $\tilde{q}_{j+1}$, resulting in nodes on $l$ (including $v$) joining another $\LS$ $l'$ in $\tilde{q}_{j+1}$. However, such ``grafting'' can only happen to $v$ finitely many times, as each time the height of the head of $l'$ is strictly smaller than the height of the head of $l$. Therefore, $v$ is eventually assigned a fixed name $i$, which gives us the third property.

($\scrL(\calB) \subseteq \scrL(\calA)$). We ought to show that if $w=w_{0}w_{1} \ldots \in \scrL(\calB)$, then the run $\varrho=\tilde{q}_{0} \tilde{q}_{1} \cdots$ of $\calB$ over $w$ induces an accepting run of $\calA$ over $w$. The assumption that $\varrho$ is accepting means that there exists an $i \in \tilde{I}$ such that $\varrho$ eventually never visits $\tilde{G}(i)$, but visits $\tilde{B}(i)$ infinitely often, or equivalently, $i$ names a green or yellow node in every state in a suffix of $\varrho$ and there are infinitely many occurrences when the nodes named by $i$ are green. Since renamed nodes are marked red, all nodes named by $i$ in the suffix have to be the same one. It follows that a node $v$ eventually stays in every state in a suffix of $\varrho$ and $v$ turns green infinitely often. The rest of the proof is the same as the one in~\cite{Safra/92/Exponential}, with the help of Lemma~\ref{lem:Streett-connecting-dots} and the fact that using $\Mini$ to select the index labels of the children of $v$ is sound and complete.
\end{proof}
\section{Complexity}
\label{sec:complexity}

In this section we state the complexity results for the determinization of Streett, generalized B\"{u}chi, parity and Rabin automata.
\begin{theorem}[Streett Determinization: Complexity]
\label{thm:Streett-determinization-complexity}
Let $\calA=\langle Q, Q_{0}, \Sigma, \Delta, \langle G, B \rangle_{I} \rangle$ be a Streett automaton with $|Q|=n$ and $I=[1..k]$, and $\calB=\langle \tilde{Q}, \tilde{q}_{0}, \Sigma, \tilde{\Delta}, [\tilde{G}, \tilde{B}]_{\tilde{I}} \rangle$ the deterministic Rabin automaton obtained by Procedure~\ref{pro:improved-Streett-determinization}. For any $k$, $|\tilde{Q}| \le 2^{12n}(0.37n)^{n^{2}+8n}=2^{O(n^{2} \lg n)}$ and $|\tilde{I}|=O(n^{2})$. If $k = O(n)$, we have $|\tilde{Q}|=2^{O(n \lg n + nk \lg k)}$ and $|\tilde{I}|=O(nk)$.
\end{theorem}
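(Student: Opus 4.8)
The plan is to bound $|\tilde{Q}|$ by counting the number of $\RSTS$, decomposing the count into a product of factors corresponding to the structural shape, the state labels $L_s$, the index labels $L_h$, the names $L_n$, and the colors $L_c$. The two key savings built into the definition of $\RSTS$ — the $\Mini$-based restriction on index labels (Condition~\ref{en:RSTS-Lh-extra}) and the bucket-based naming (Condition~\ref{en:RSTS-Ln-extra}) — are exactly what make the index-label count and the name count small; the rest is routine Stirling-type estimation.

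First I would fix the structural skeleton. By the tree-size analysis already given in Section~\ref{sec:improved-Streett-determinization}, an $\RSTS$ has at most $n$ leaves, hence at most $n$ left spines, each of length at most $\mu+1$, and at most $n(\mu+1)$ nodes in total; the number of unlabeled ordered trees of this shape is at most $2^{O(n\lg n)}$ (in fact bounded by a Catalan-type quantity in the number of leaves, which is $\le n$). Given the skeleton, the state labels are determined by assigning each of the $n$ states of $Q$ to a node (or to "no node"), subject to the disjointness/union constraints of Condition~\ref{en:STS-set-label}; since sibling state-labels partition their parent's, a state's membership is captured by choosing, at each level, which child it descends into, so the number of $L_s$-assignments is at most $(n(\mu+1)+1)^n = 2^{O(n\lg(n\mu))}$, which is $2^{O(n^2\lg n)}$ in general and $2^{O(n\lg n + nk)}$ when $k=O(n)$. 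Colors contribute a negligible $3^{n(\mu+1)}=2^{O(n\mu)}$ factor.

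Next comes the heart of the argument: counting index labels. I would use the $\ITS$ $\calT(n,k,B)$ of Definition~\ref{def:ITS}. The point of Condition~\ref{en:RSTS-Lh-extra} is that along any left spine the sequence of $I$-labels (dropping the possible leading $0$ at the head) is a root-to-node path in $\calT(n,k,B)$, and a node's children-labels under Step~\eqref{en:RSTS-expansion-Bs}/\eqref{en:RSTS-expansion-Gs} also draw from $\Mini(L^{\to}_h(\cdot))$. So the index-label assignment for the whole tree is determined by choosing, for each of the $\le n$ left spines, a path in $\calT(n,k,B)$ together with the at-most-$\mu+1$ branch positions where spawned children with new index labels occur — again a polynomial-in-$n\mu$ number of choices per node. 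The crux is to bound the number of distinct paths (equivalently nodes) in $\calT(n,k,B)$: each node extends its parent's cover $\bigcup B(\alpha[i])$ strictly, the path length is $\le\mu$, and at each level the branching is controlled by $\Mini$, so a path is specified by at most $\mu$ "new-set" choices out of the at most $\min(2^n, k)$ candidate $B$-values, with ties broken numerically. This gives a node count for $\calT(n,k,B)$ of at most $k^{O(\mu)}=2^{O(\mu\lg k)}$, and when $k=\omega(n)$ the better estimate $2^{O(\mu^2\lg\mu)}\le 2^{O(n^2\lg n)}$ via the "increasing sequence of subsets of an $n$-set" viewpoint (a path is an increasing chain in $2^Q$, of which there are at most $(n+1)^{\mu}\cdot\mu!$-ish many). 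Aggregating over $\le n$ spines yields $2^{O(n^2\lg n)}$ in general and $2^{O(nk\lg k)}$ — crucially without the spurious $\lg n$ on the $k$ — when $k=O(n)$. This is the step I expect to be the main obstacle, since getting the exponent down to $n^2\lg n$ (rather than $n^2\lg nk$) and getting the sharp constant $0.37n$ requires the $\ITS$-based path count to be tight, and the bookkeeping between the global tree and the per-spine path decomposition has to be done carefully so that no node is double-counted.

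Finally I would handle the names. Because of Condition~\ref{en:RSTS-Ln-extra}, the name $L_n(v)$ of every node is a deterministic function of its position in its left spine and the bucket index $\bucket(v)\in[1..n]$ assigned to that spine; distinct spines get distinct buckets (injectivity of $L_n$). Hence the name assignment is nothing more than an injection from the $\le n$ left spines into the $n$ buckets, contributing at most $n!=2^{O(n\lg n)}$ — the key point being that this replaces the catastrophic $(n(\mu+1))!=2^{O(nk\lg nk)}$ of the retail scheme. I would remark (referring to the correctness proof, Theorem~\ref{thm:Streett-determinization-correctness}) that the structural ordering on siblings makes this bucket assignment well-defined and stable under grafting. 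Multiplying the five factors — skeleton $2^{O(n\lg n)}$, states $2^{O(n\lg(n\mu))}$, colors $2^{O(n\mu)}$, index labels $2^{O(n^2\lg n)}$ (resp.\ $2^{O(nk\lg k)}$), names $2^{O(n\lg n)}$ — gives $|\tilde Q| = 2^{O(n^2\lg n)}$ for arbitrary $k$ and $2^{O(n\lg n + nk\lg k)}$ for $k=O(n)$. For the explicit bound $2^{12n}(0.37n)^{n^2+8n}$ I would carry the constants through the Stirling estimates on the $\ITS$ path count and the $(n(\mu+1))$-node state assignment, noting $\mu\le n$ so $n(\mu+1)\le n^2+n$ and the dominant factor is $(cn)^{n^2}$ with $c=1/e\approx 0.3679<0.37$ after absorbing lower-order terms into the $2^{12n}$; the index size is $|\tilde I| = n(\mu+1) = O(n^2)$, and $O(nk)$ when $k=O(n)$, directly from the definition of $\tilde I$ in Procedure~\ref{pro:improved-Streett-determinization}.
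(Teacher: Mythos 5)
Your decomposition into five factors (ordered-tree skeleton, state labels, colors, index labels, names) is exactly the paper's own cost breakdown (Figure~\ref{fig:cost-break-down}), and the two load-bearing ideas are the right ones: the $\Mini$/$\ITS$ restriction makes each root-to-leaf index sequence an increasing chain in $2^{Q}$ (hence $2^{O(n\lg n)}$ choices per leaf when $k=\omega(n)$ and $k^{O(\mu)}$ when $k=O(n)$, giving $2^{O(n^{2}\lg n)}$ resp.\ $2^{O(nk\lg k)}$ over at most $n$ leaves), and the bucket scheme turns the name count into an injection of at most $n$ left spines into $n$ buckets, i.e.\ $n!=2^{O(n\lg n)}$. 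So the asymptotic claims are obtained essentially as in the paper. Two internal slips are worth fixing: $(n(\mu+1)+1)^{n}$ is $2^{O(n\lg n)}$, not $2^{O(n^{2}\lg n)}$; and your stated $\ITS$ node count $2^{O(\mu^{2}\lg\mu)}$ is inconsistent with your own aggregation --- raising it to the $n$-th power would give $2^{O(n^{3}\lg n)}$. What saves you is the parenthetical chain argument: a path in $\calT(n,k,B)$ is determined (via the numeric tie-breaking in Condition~\eqref{eq:mini-2}) by its strictly increasing chain of unions, so the number of paths is $n^{O(n)}$, and that is the bound you must use per leaf.

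The genuine gap is the explicit bound $2^{12n}(0.37n)^{n^{2}+8n}$. You assert the constant $0.37>1/e$ emerges ``after absorbing lower-order terms,'' but your own intermediate estimates are too coarse for that: the color factor $3^{n(\mu+1)}=2^{\Theta(n^{2})}$ (for $k=\omega(n)$) is not a lower-order term relative to the available slack $(0.37e)^{n^{2}}\approx 2^{0.008n^{2}}$, and a per-leaf chain count of order $(n+1)^{n}$ (let alone $(n+1)^{\mu}\mu!$) raised to the $n$-th power already exceeds $(0.37n)^{n^{2}}$ by a factor exponential in $n^{2}$. Matching the stated constant requires the sharper counts the paper alludes to in Figure~\ref{fig:cost-break-down} --- in particular a color count of $2^{O(n\lg n)}$ rather than $3^{\#\text{nodes}}$, and a joint (not leaf-by-leaf independent) count of the index labelings --- none of which your sketch supplies. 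The $2^{O(\cdot)}$-level statements stand; the numeric bound does not yet follow from what you wrote.
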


\let\oldthefootnote\thefootnote
\renewcommand{\thefootnote}{\fnsymbol{footnote}}

\begin{figure}
\begin{center}
\begin{tabular}{|l|l|l|ll|}
\hline
Cost             & Safra's & Piterman's & Ours    & \\ \hline
Ordered Tree       & $2^{O(nk)}$ &   $2^{O(nk)}$      & $2^{O(n \lg n)}$ & \\ \hline
Name  &    $2^{O(nk \lg nk)}$  &  $2^{O(nk \lg nk)}$\footnotemark[2] & $2^{O(n \lg n)}$ &  \\ \hline
\multirow{2}{*}{Color} & \multirow{2}{*}{$2^{O(nk)}$} & \multirow{2}{*}{$O((nk)^{2})$\footnotemark[2]} & $2^{O(n \lg k)}$ & $k = O(n)$ \\ \cline{4-5}
                 &                      & & $2^{O(n \lg n)}$ & $k=\omega(n)$ \\ \hline
Set Label & $2^{O(n \lg n)}$  & $2^{O(n \lg n)}$  & $2^{O(n \lg n)}$ &  \\ \hline
\multirow{2}{*}{Index Label} & \multirow{2}{*}{$2^{O(nk \lg nk)}$} & \multirow{2}{*}{$2^{O(nk \lg nk)}$} & $2^{O(n \lg n + nk \lg k)}$ & $k = O(n)$ \\ \cline{4-5}
                 &                     &  & $2^{O(n^{2} \lg n)}$ & $k=\omega(n)$ \\ \hline
\multirow{2}{*}{Total} & \multirow{2}{*}{$2^{O(nk \lg nk)}$} & \multirow{2}{*}{$2^{O(nk \lg nk)}$} & $2^{O(n \lg n + nk \lg k)}$ & $k = O(n)$ \\ \cline{4-5}
                 &                     &  & $2^{O(n^{2} \lg n)}$ & $k=\omega(n)$ \\ \hline
\end{tabular}
\end{center}
\caption{Cost breakdown of Streett determinization.}
\label{fig:cost-break-down}
\end{figure}
\footnotetext[2]{In Piterman's construction, the number of ordered trees times the number of name combinations is bounded by $(nk)^{nk}$. However, the second factor is still the dominating one, costing $2^{\Omega(nk \lg nk)}$. Also, there is no notion of color in Piterman's construction. Instead, each tree is associated with two special names both ranging from $1$ to $nk$, resulting in the cost $O((nk)^{2})$.}

\let\thefootnote\oldthefootnote

Figure~\ref{fig:cost-break-down} breaks down the total cost into five categories and compares our construction with previous ones in each category. It turns out that the complexity analysis can be easily adapted for generalized B\"{u}chi and parity automata as they are subclasses of Streett automata.

\begin{corollary}[Generalized B\"{u}chi Determinization: Complexity]
\label{cor:GB-upper-bound}
Let $\calA$ be a generalized B\"{u}chi automaton with state size $n$ and index size $k$. There is an equivalent deterministic Rabin automaton $\calB$ with state size $2^{O(n \lg nk)}$. The index size is $O(nk)$ if $k=O(n)$ and $O(n^{2})$ if $k = \omega(n)$.
\end{corollary}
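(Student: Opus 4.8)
The plan is to specialise Procedure~\ref{pro:improved-Streett-determinization}, not to build anything new. A generalised B\"uchi condition $\langle B\rangle_{I}$ over $Q$ is literally the Streett condition $\langle G,B\rangle_{I}$ with $G(i)=Q$ for every $i\in I$: since $\Inf(\varrho)\cap Q\neq\emptyset$ for every run $\varrho$, the Streett implication $\Inf(\varrho)\cap G(i)\neq\emptyset\to\Inf(\varrho)\cap B(i)\neq\emptyset$ collapses to $\Inf(\varrho)\cap B(i)\neq\emptyset$. So I run Procedure~\ref{pro:improved-Streett-determinization} on this $\calA$; Theorem~\ref{thm:Streett-determinization-correctness} then gives a deterministic Rabin automaton $\calB$ with $\scrL(\calB)=\scrL(\calA)$ and index set $\tilde{I}=[1..n(\mu+1)]$. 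Since $\mu=\min(n,k)$, this already yields $|\tilde{I}|=O(nk)$ when $k=O(n)$ and $|\tilde{I}|=O(n^{2})$ when $k=\omega(n)$, which is the index-size part of the claim.

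For the state bound I reuse the cost breakdown of Figure~\ref{fig:cost-break-down} that underlies Theorem~\ref{thm:Streett-determinization-complexity}. The ordered-tree, name, and set-label factors of $|\tilde{Q}|$ are $2^{O(n\lg n)}$ and the colour factor is $2^{O(n\lg k)}$, all of which are at most $2^{O(n\lg nk)}$. Hence the corollary reduces to a single inequality: the number of admissible assignments of index labels to an $\RSTS$ produced by this construction is $2^{O(n\lg nk)}$. This is the only place where the hypothesis $G\equiv Q$ is needed --- for general Streett this factor is the strictly larger $2^{O(n\lg n+nk\lg k)}$ (or $2^{O(n^{2}\lg n)}$) occupying the ``Index Label'' row of Figure~\ref{fig:cost-break-down}.

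To bound that factor I would exploit that, because every $G(i)=Q$, the reset step~\eqref{en:RSTS-expansion-Gs} at a child $v_{j'}$ of a node $v$ fires \emph{exactly} when $L_{s}(v_{j'})\cap B(i_{j'})=\emptyset$, i.e.\ solely on failure of the current positive obligation and never for a reason independent of $B$. I would then prove, by induction on the tree transformation of Procedure~\ref{pro:improved-Streett-determinization}, that every reachable $\RSTS$ satisfies a strengthening of Condition~\eqref{en:RSTS-Lh-extra}: the index labels read along any left spine form the canonical fully grown $\Mini$-extension of the spine's head, and hence are a deterministic function of the head's index label together with the sequence $L^{\to}_{h}$ at the head. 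Given this, the index-label content of the whole tree is fixed once one fixes, for each of the at most $n$ left spines, the index label of its head --- a value in $[0..k]$ --- the incidence of spines and sub-spines being already counted by the $2^{O(n\lg n)}$ ordered-tree factor. That leaves at most $(k+1)^{n}=2^{O(n\lg k)}\le 2^{O(n\lg nk)}$ index-label assignments, and multiplying the five factors gives $|\tilde{Q}|=2^{O(n\lg nk)}$.

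I expect the strengthened invariant of the previous paragraph to be the main obstacle: one must verify that, under $G\equiv Q$, the construction never grows a left spine that deviates from the canonical $\Mini$-extension of its head --- in particular that the reset step~\eqref{en:RSTS-expansion-Gs} (which copies the \emph{parent's} index label), the horizontal and vertical merges~\eqref{en:RSTS-h-merge}--\eqref{en:RSTS-v-merge}, and the concluding application of Procedure~\ref{pro:grow-left-spine} all preserve it. Once the invariant is established the counting is routine, and the parity case (parity being a subclass of Streett) goes through in the same way, with the index-size bound again read directly off $\tilde{I}=[1..n(\mu+1)]$.
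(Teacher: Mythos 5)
Your top-level strategy is the paper's: generalized B\"{u}chi is the Streett condition with $G(i)=Q$, correctness comes from Theorem~\ref{thm:Streett-determinization-correctness}, the index size is read off $\tilde{I}=[1..n(\mu+1)]$, and among the five factors of Figure~\ref{fig:cost-break-down} only the index-label factor needs to drop to $2^{O(n\lg nk)}$. All of that is fine. The gap is the invariant you rest the index-label count on: that the labels along a left spine form the canonical fully grown $\Mini$-extension of the spine's head, so that fixing one value in $[0..k]$ per spine fixes all index labels. This is not preserved by the transformation. A child of an internal node $v$ is removed whenever its state label empties, and under $G\equiv Q$ step~\eqref{en:RSTS-expansion-Gs} empties a child the moment it misses its $B$-set, so this happens constantly. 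When the leftmost child of $v$ (the one carrying $\max(\Mini(L^{\to}_{h}(v)))$) is removed, the second-leftmost child --- whose label is some \emph{smaller} element of $\Mini(L^{\to}_{h}(v))$ --- grafts into $v$'s left spine (the paper's renaming discussion and Figure~\ref{fig:STS-naming} describe exactly this event), and nothing ever re-creates a child of $v$ with the maximal label: step~\eqref{en:RSTS-expansion-Bs} only produces labels strictly below that of the child it promotes from, and Procedure~\ref{pro:grow-left-spine} only extends leaves, while $v$ is internal. After one such removal the spine through $v$ deviates from the canonical extension, and in general a spine can carry a wide range of selections from the successive $\Mini$ sets, not one per head label. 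Note also that your invariant nowhere genuinely uses $G\equiv Q$ (that hypothesis only makes removals \emph{more} frequent); if the invariant held it would hold for arbitrary Streett automata and would collapse the index-label row of Figure~\ref{fig:cost-break-down} to $2^{O(n\lg k)}$ in general, contradicting the paper's own $2^{O(n\lg n+nk\lg k)}$ and $2^{O(n^{2}\lg n)}$ entries. That internal inconsistency is a reliable sign the invariant cannot be the right lever.

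What is genuinely special about $G\equiv Q$ is semantic rather than combinatorial: $\Mini$ and $\Cover$ depend only on $B$, so the set of syntactically admissible $\RSTS$ is unchanged, but no run can avoid $G(i)=Q$, hence the only viable witness set is $I$ and the recursive spawning of subtrees that hypothesize ``never visits $G(i)$ again'' carries no information. Any proof of the $2^{O(n\lg nk)}$ bound has to convert that observation into either a restriction on the \emph{reachable} trees or a degenerated variant of the construction in which each node effectively keeps a single round-robin pointer into $[1..k]$ (giving $k^{O(n)}=2^{O(n\lg k)}$ for the index labels), in the spirit of the standard generalized B\"{u}chi determinization. Your proposal correctly isolates where the work lies, and the index-size claim goes through, but the state bound is not yet established.
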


\begin{corollary}[Parity Determinization: Complexity]
\label{cor:parity-upper-bound}
Let $\calA$ be a parity automaton with state size $n$ and index size $k$. There is an equivalent deterministic Rabin automaton $\calB$ with state size $2^{O(n \lg n)}$ and index size $O(nk)$.
\end{corollary}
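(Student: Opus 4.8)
The plan is to feed $\calA$, viewed as a Streett automaton, into Procedure~\ref{pro:improved-Streett-determinization}. By Theorem~\ref{thm:Streett-determinization-correctness} the output $\calB$ is a deterministic Rabin automaton with $\scrL(\calB)=\scrL(\calA)$, and by Procedure~\ref{pro:improved-Streett-determinization} we have $\tilde I=[1..n(\mu+1)]$, so $|\tilde I|=n(\mu+1)=O(nk)$ is immediate. The only real work is to sharpen the state count from the $k=O(n)$ bound $2^{O(n\lg n+nk\lg k)}$ of Theorem~\ref{thm:Streett-determinization-complexity} down to $2^{O(n\lg n)}$ by exploiting the parity structure in the cost breakdown of Figure~\ref{fig:cost-break-down}.

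First I would record the structural fact. After the injectivity normalization of the preliminaries, the parity chain $B(1)\subset G(1)\subset\cdots\subset B(k)\subset G(k)$ yields a strict chain $B(1)\subsetneq B(2)\subsetneq\cdots\subsetneq B(k)$ of subsets of $Q$; in particular $k\le n+1$, so $k=O(n)$ holds automatically for parity automata. Hence for every finite $I$-sequence $\alpha$ one has $\bigcup_{i=1}^{|\alpha|}B(\alpha[i])=B(m)$, where $m$ is the largest index occurring in $\alpha$ (and $m=0$, $B(0)=\emptyset$, when $\alpha$ is empty); consequently $\Cover(\alpha)=[1..m]$ and $\Mini(\alpha)$ is either empty (when $m=k$) or the singleton $\{m+1\}$. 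By Definition~\ref{def:ITS} the $\ITS$ $\calT(n,k,B)$ therefore degenerates to a single path $1,2,\ldots,\mu$, and every node $v$ of every $\RSTS$ has $L^{\to}_{h}(v)$ equal to a prefix $\langle 1,\ldots,\ell\rangle$ of that path, by induction over Step~\eqref{en:RSTS-expansion-Gs}/\eqref{en:RSTS-expansion-Gs}'s siblings and Procedure~\ref{pro:grow-left-spine}.

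Next I would go through Figure~\ref{fig:cost-break-down} term by term: the ordered-tree, name, and set-label factors are $2^{O(n\lg n)}$ independently of $k$, and the color factor is $2^{O(n\lg k)}\le 2^{O(n\lg n)}$ since $k=O(n)$; only the index-label factor, costed at $2^{O(nk\lg k)}$ in the general analysis, must be reduced. For this I would reuse the counting in the proof of Theorem~\ref{thm:Streett-determinization-complexity}, which reconstructs the index labels of an $\RSTS$ from the index-label sequences carried by its at most $n$ left spines, each such sequence being a path of $\calT(n,k,B)$. As $\calT(n,k,B)$ now has exactly one path of each length, each left spine admits at most $\mu+1$ choices, for a total of $(\mu+1)^{O(n)}=2^{O(n\lg n)}$; the remaining freedom --- where each spine head attaches, and the at most $n$ index-$0$ children created by Step~\eqref{en:RSTS-expansion-Gs} that survive the vertical merge --- is already subsumed by the ordered-tree count. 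Multiplying the five now-uniform factors gives $|\tilde Q|=2^{O(n\lg n)}$.

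The step I expect to be the main obstacle is this last one: one must make precise that the index-label bound in the proof of Theorem~\ref{thm:Streett-determinization-complexity} genuinely factors through $\calT(n,k,B)$ in the ``one path per left spine'' fashion, so that substituting a single-path $\ITS$ removes the $\lg k$-per-node blow-up rather than merely shrinking the per-node alphabet. Everything else --- including verifying that $G$-steps and vertical merges introduce no index-label diversity beyond the at most $n$ index-$0$ nodes already counted in Section~\ref{sec:improved-Streett-determinization} --- is routine.
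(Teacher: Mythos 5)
Your proposal is correct and follows essentially the route the paper intends (the paper omits the proof but explicitly says the Streett analysis ``can be easily adapted'' for parity): run Procedure~\ref{pro:improved-Streett-determinization} on $\calA$ viewed as a Streett automaton, observe that the parity chain forces $k=O(n)$ and makes $\Mini(\alpha)$ a singleton so that $\calT(n,k,B)$ degenerates to a single path, and conclude that the index-label factor --- the only term in Figure~\ref{fig:cost-break-down} exceeding $2^{O(n\lg n)}$ --- collapses to $2^{O(n\lg n)}$. Your own flagged worry is the right one but is harmless here, since the paper's index-label count is set up precisely as one ITS path per left spine, of which there are at most $n$.
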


The determinization of a Rabin automaton with acceptance condition $[G, B]_{I}$ (the dual of $\langle G,B\rangle_{I}$ for $I=[1..k]$) can be straightforwardly obtained by running, in parallel, $k$ modified Safra trees, each of which monitors runs for an individual Rabin condition $[G(i), B(i)]$ ($i \in I$).

\begin{theorem}[Rabin Determinization: Complexity]
\label{the:rabin-upper-bound}
Let $\calA$ be a Rabin automaton with state size $n$ and index size $k$. There is an equivalent deterministic Rabin automaton $\calB$ with state size $2^{O(nk \lg n)}$ and index size $O(nk)$.
\end{theorem}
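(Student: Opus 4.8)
The plan is to reduce the determinization of a $k$-pair Rabin automaton to $k$ independent single-pair instances and then take a product, following the remark made just before the statement. First I would note that a run $\varrho$ of $\calA$ satisfies the single Rabin pair $[G(i),B(i)]$ (i.e.\ $\Inf(\varrho)\cap G(i)\neq\emptyset$ and $\Inf(\varrho)\cap B(i)=\emptyset$) if and only if $\varrho$ has a suffix contained entirely in $Q\setminus B(i)$ that visits $G(i)$ infinitely often. Consequently the language $\scrL_i$ of $\omega$-words admitting such a run of $\calA$ is recognized by a nondeterministic B\"uchi automaton $\calA_i$ with at most $2n$ states: $\calA_i$ simulates $\calA$; at a nondeterministically chosen moment it ``commits'' by jumping to a marked copy of $Q\setminus B(i)$; in the committed phase all transitions are kept inside $Q\setminus B(i)$; and the B\"uchi final set consists of the marked copies of the states in $G(i)$.

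Next I would determinize each $\calA_i$ by Safra's B\"uchi determinization (equivalently, by applying Procedure~\ref{pro:improved-Streett-determinization} to $\calA_i$ viewed as a one-pair Streett automaton), obtaining a deterministic Rabin automaton $\calB_i$ with $|\tilde Q_i|=2^{O(n\lg n)}$ states and $O(n)$ accepting pairs and with $\scrL(\calB_i)=\scrL_i$. I would then form the product $\calB=\prod_{i\in I}\calB_i$ and equip it with the \emph{disjunctive} Rabin condition obtained by transporting, along the projection onto the $i$-th coordinate, each accepting pair of each $\calB_i$. Since a finite union of Rabin conditions is again a Rabin condition, $\calB$ is a deterministic Rabin automaton; it has $\prod_{i\in I}|\tilde Q_i|=\bigl(2^{O(n\lg n)}\bigr)^{k}=2^{O(nk\lg n)}$ states and $\sum_{i\in I}O(n)=O(nk)$ pairs, which are exactly the claimed bounds.

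For correctness I would argue on one side that $\scrL(\calB)=\bigcup_{i\in I}\scrL(\calB_i)=\bigcup_{i\in I}\scrL_i$, which is immediate from the product-plus-disjunction construction, and on the other side that $w\in\scrL(\calA)$ precisely when some run of $\calA$ on $w$ satisfies some pair $[G(i),B(i)]$; since the existential quantifier over runs commutes with the finite existential quantifier over $i\in I$, the latter is exactly $w\in\bigcup_{i\in I}\scrL_i$, so $\scrL(\calB)=\scrL(\calA)$. I do not expect a serious obstacle: the statement is essentially bookkeeping. The two points that need a little care are (i) that $\calA_i$ has exactly the intended language --- that $\Inf(\varrho)\cap B(i)=\emptyset$ is equivalent to $\varrho$ having a $B(i)$-free suffix, and that the nondeterministic ``commit'' step introduces no spurious words --- and (ii) that pulling back the accepting pairs of the factors to the product state space and taking their union genuinely yields a Rabin condition with the stated number of pairs. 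One may instead phrase the whole argument as ``$k$ modified Safra trees running in parallel'', the $i$-th tree monitoring runs that eventually avoid $B(i)$ while revisiting $G(i)$ infinitely often; this dispenses with the auxiliary automata $\calA_i$ but leaves the counting unchanged.
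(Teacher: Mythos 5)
Your proposal is correct and matches the paper's own (one-line) argument, which is precisely to run $k$ modified Safra trees in parallel, the $i$-th one monitoring runs that eventually avoid $B(i)$ while visiting $G(i)$ infinitely often; your explicit detour through the $2n$-state B\"uchi automata $\calA_i$ and the product with the disjunctive (lifted) Rabin pairs is just a more detailed rendering of the same construction, and your counting $\bigl(2^{O(n\lg n)}\bigr)^{k}=2^{O(nk\lg n)}$ states with $O(nk)$ pairs agrees with the claimed bounds.
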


We note that it is unlikely that there exists a Safra-tree style determinization for Rabin automata, because an analogue of Lemma~\ref{lem:Streett-connecting-dots} fails due to the existential nature of Rabin acceptance conditions. 
\section{Concluding Remarks}
\label{sec:conclusion}
In this paper we improved Safra's construction and obtained tight upper bounds on the determinization complexities of Streett, generalized B\"{u}chi, parity and Rabin automata. Figure~\ref{fig:bound-summary} summarizes these complexity results.

Our results show an interesting phenomenon that in the asymptotic notation $2^{\Theta(\cdot)}$, complementation complexity is identical to determinization complexity. The same phenomenon happens to finite automata on finite words. We believe it is worth investigating the reason behind this phenomenon.

As mentioned earlier, determinization procedures that output parity automata, like Piterman's constructions, are preferable to the classic ones that output Rabin automata. We plan to investigate how to combine Piterman's node renaming scheme with ours to obtain determinization procedures that output parity automata with optimal state complexity.

\begin{figure}[t!]
\begin{center}
\begin{tabular}{|l||ll|l|l|}
\hline
Type        & Bound                  &                       & Lower            & Upper        \\ \hline
B\"{u}chi        & $2^{\Theta(n \lg n)}$  &                  & \cite{Michel/88/Complementation}     & \cite{Safra/88/Complexity} \\ \hline
Generalized B\"{u}chi       & $2^{\Theta(n \lg nk)}$ &   & \cite{Yan/06/Lower}     & \cite{Kupferman+Vardi/05/Complementation} \\ \hline
\multirow{2}{*}{Streett}  & $2^{\Theta(n \lg n + nk \lg k)}$ & $k = O(n)$      & \multirow{2}{*}{\cite{Cai+Zhang/11/Tight+Lower}}  & \multirow{2}{*}{\cite{Cai+Zhang/11/Tight+Upper}} \\
                             & $2^{\Theta(n^{2} \lg n)}$ & $k=\omega(n)$   &      &       \\ \hline
Rabin    & $2^{\Theta(nk \lg n)}$ &    & \cite{Cai+Zhang+Luo/09/Improved}     & \cite{Kupferman+Vardi/05/Complementation} \\ \hline
Parity   & $2^{\Theta(n \lg n)}$  &    & \cite{Michel/88/Complementation}     & \cite{Cai+Zhang/11/Tight+Upper}         \\ \hline
\end{tabular}
\end{center}
\caption{\textrm{Complementation and determinization complexities for $\omega$-automata of common types. The listed citations are meant for complementation only.}}
\label{fig:bound-summary}
\end{figure} 

\bibliographystyle{eptcs}
\bibliography{automata}

\begin{thebibliography}{10}
\providecommand{\bibitemdeclare}[2]{}
\providecommand{\surnamestart}{}
\providecommand{\surnameend}{}
\providecommand{\urlprefix}{Available at }
\providecommand{\url}[1]{\texttt{#1}}
\providecommand{\href}[2]{\texttt{#2}}
\providecommand{\urlalt}[2]{\href{#1}{#2}}
\providecommand{\doi}[1]{doi:\urlalt{http://dx.doi.org/#1}{#1}}
\providecommand{\bibinfo}[2]{#2}

\bibitemdeclare{incollection}{Buchi/66/Decision}
\bibitem{Buchi/66/Decision}
\bibinfo{author}{Richard \surnamestart B{\"{u}}chi\surnameend}
  (\bibinfo{year}{1966}): \emph{\bibinfo{title}{Symposium on Decision Problems:
  On a Decision Method in Restricted Second Order Arithmetic}}.
\newblock In: {\sl \bibinfo{booktitle}{Logic, Methodology and Philosophy of
  Science, Proceeding of the 1960 International Congress}}, {\sl
  \bibinfo{series}{Studies in Logic and the Foundations of
  Mathematics}}~\bibinfo{volume}{44}, \bibinfo{publisher}{Elsevier}, pp.
  \bibinfo{pages}{1--11}, \doi{10.1016/S0049-237X(09)70564-6}.

\bibitemdeclare{inproceedings}{Cai+Zhang/11/Tight+Lower}
\bibitem{Cai+Zhang/11/Tight+Lower}
\bibinfo{author}{Yang \surnamestart Cai\surnameend} \& \bibinfo{author}{Ting
  \surnamestart Zhang\surnameend} (\bibinfo{year}{2011}):
  \emph{\bibinfo{title}{A Tight Lower Bound for Streett Complementation}}.
\newblock In: {\sl \bibinfo{booktitle}{Proceedings of the IARCS Annual
  Conference on Foundations of Software Technology and Theoretical Computer
  Science (FSTTCS 2011)}}, pp. \bibinfo{pages}{339--350},
  \doi{10.4230/LIPIcs.FSTTCS.2011.339}.

\bibitemdeclare{inproceedings}{Cai+Zhang/11/Tight+Upper}
\bibitem{Cai+Zhang/11/Tight+Upper}
\bibinfo{author}{Yang \surnamestart Cai\surnameend} \& \bibinfo{author}{Ting
  \surnamestart Zhang\surnameend} (\bibinfo{year}{2011}):
  \emph{\bibinfo{title}{Tight Upper Bounds for Streett and Parity
  Complementation}}.
\newblock In: {\sl \bibinfo{booktitle}{Proceedings of the 20th Conference on
  Computer Science Logic (CSL 2011)}}, \bibinfo{publisher}{Dagstuhl
  Publishing}, pp. \bibinfo{pages}{112--128},
  \doi{10.4230/LIPIcs.CSL.2011.112}.

\bibitemdeclare{inproceedings}{Cai+Zhang+Luo/09/Improved}
\bibitem{Cai+Zhang+Luo/09/Improved}
\bibinfo{author}{Yang \surnamestart Cai\surnameend}, \bibinfo{author}{Ting
  \surnamestart Zhang\surnameend} \& \bibinfo{author}{Haifeng \surnamestart
  Luo\surnameend} (\bibinfo{year}{2009}): \emph{\bibinfo{title}{An Improved
  Lower Bound for the Complementation of Rabin Automata}}.
\newblock In: {\sl \bibinfo{booktitle}{Proceedings of the 24th Annual IEEE
  Symposium on Logic In Computer Science}}, \bibinfo{publisher}{IEEE Computer
  Society}, pp. \bibinfo{pages}{167--176}, \doi{10.1109/LICS.2009.13}.

\bibitemdeclare{article}{Choueka/74/Theories}
\bibitem{Choueka/74/Theories}
\bibinfo{author}{Yaacov \surnamestart Choueka\surnameend}
  (\bibinfo{year}{1974}): \emph{\bibinfo{title}{Theories of automata on
  $\omega$-types: a simplified appraoch}}.
\newblock {\sl \bibinfo{journal}{Journal of Computer and System Science}}
  \bibinfo{volume}{8}(\bibinfo{number}{2}), pp. \bibinfo{pages}{117--141},
  \doi{10.1016/S0022-0000(74)80051-6}.

\bibitemdeclare{inproceedings}{Colcombet+Zdanowski/09/Tight}
\bibitem{Colcombet+Zdanowski/09/Tight}
\bibinfo{author}{Thomas \surnamestart Colcombet\surnameend} \&
  \bibinfo{author}{Konrad \surnamestart Zdanowski\surnameend}
  (\bibinfo{year}{2009}): \emph{\bibinfo{title}{A Tight Lower Bound for
  Determinization of Transition Labeled {B\"{u}chi} Automata}}.
\newblock In: {\sl \bibinfo{booktitle}{Proceedings of the 36th Internatilonal
  Collogquium on Automata, Languages and Programming (ICALP 2009)}},
  \bibinfo{publisher}{Springer-Verlag}, pp. \bibinfo{pages}{151--162},
  \doi{10.1007/978-3-642-02930-1\_13}.

\bibitemdeclare{book}{Francez/1986/Fairness}
\bibitem{Francez/1986/Fairness}
\bibinfo{author}{Nissim \surnamestart Francez\surnameend}
  (\bibinfo{year}{1986}): \emph{\bibinfo{title}{Fairness}}.
\newblock \bibinfo{publisher}{Springer-Verlag New York, Inc.}

\bibitemdeclare{inproceedings}{Francez/84/Generalized}
\bibitem{Francez/84/Generalized}
\bibinfo{author}{Nissim \surnamestart Francez\surnameend} \&
  \bibinfo{author}{Dexter \surnamestart Kozen\surnameend}
  (\bibinfo{year}{1984}): \emph{\bibinfo{title}{Generalized fair termination}}.
\newblock In: {\sl \bibinfo{booktitle}{Proceedings of the 11th ACM
  SIGACT-SIGPLAN symposium on Principles of programming languages (POPL'84)}},
  \bibinfo{publisher}{ACM}, pp. \bibinfo{pages}{46--53},
  \doi{10.1145/800017.800515}.

\bibitemdeclare{inproceedings}{Gurevich+Harrington/82/Trees}
\bibitem{Gurevich+Harrington/82/Trees}
\bibinfo{author}{Yuri \surnamestart Gurevich\surnameend} \&
  \bibinfo{author}{Leo \surnamestart Harrington\surnameend}
  (\bibinfo{year}{1982}): \emph{\bibinfo{title}{Trees, Automata, and Games}}.
\newblock In: {\sl \bibinfo{booktitle}{Proceedings of the 14th annual ACM
  symposium on Theory of computing (STOC'82)}}, pp. \bibinfo{pages}{60--65},
  \doi{10.1145/800070.802177}.

\bibitemdeclare{incollection}{Kupferman+Vardi/05/Complementation}
\bibitem{Kupferman+Vardi/05/Complementation}
\bibinfo{author}{Orna \surnamestart Kupferman\surnameend} \&
  \bibinfo{author}{Moshe~Y. \surnamestart Vardi\surnameend}
  (\bibinfo{year}{2005}): \emph{\bibinfo{title}{Complementation Constructions
  for Nondeterministic Automata on Infinite Words}}.
\newblock In: {\sl \bibinfo{booktitle}{Tools and Algorithms for the
  Construction and Analysis of Systems}}, {\sl \bibinfo{series}{Lecture Notes
  in Computer Science}} \bibinfo{volume}{3440}, \bibinfo{publisher}{Springer
  Berlin / Heidelberg}, pp. \bibinfo{pages}{206--221},
  \doi{10.1007/978-3-540-31980-1\_14}.

\bibitemdeclare{inproceedings}{Loding/99/Optimal}
\bibitem{Loding/99/Optimal}
\bibinfo{author}{Christof \surnamestart L{\"{o}}ding\surnameend}
  (\bibinfo{year}{1999}): \emph{\bibinfo{title}{Optimal Bounds for
  Transformations of omega-Automata}}.
\newblock In: {\sl \bibinfo{booktitle}{Proceedings of the 19th Conference on
  Foundations of Software Technology and Theoretical Computer Science}},
  \bibinfo{publisher}{Springer-Verlag}, pp. \bibinfo{pages}{97--109},
  \doi{10.1007/3-540-46691-6\_8}.

\bibitemdeclare{article}{McNaughton/66/Testing}
\bibitem{McNaughton/66/Testing}
\bibinfo{author}{Robert \surnamestart McNaughton\surnameend}
  (\bibinfo{year}{1966}): \emph{\bibinfo{title}{Testing and Generating Infinite
  Sequences by a Finite Automaton}}.
\newblock {\sl \bibinfo{journal}{Information and Control}}
  \bibinfo{volume}{9}(\bibinfo{number}{5}), pp. \bibinfo{pages}{521--530},
  \doi{10.1016/S0019-9958(66)80013-X}.

\bibitemdeclare{misc}{Michel/88/Complementation}
\bibitem{Michel/88/Complementation}
\bibinfo{author}{M.~\surnamestart Michel\surnameend} (\bibinfo{year}{1988}):
  \emph{\bibinfo{title}{Complementation is more difficult with automata on
  infinite words}}.
\newblock \bibinfo{note}{CNET}.

\bibitemdeclare{article}{Muller+Schupp/95/Simulating}
\bibitem{Muller+Schupp/95/Simulating}
\bibinfo{author}{David~E. \surnamestart Muller\surnameend} \&
  \bibinfo{author}{Paul~E. \surnamestart Schupp\surnameend}
  (\bibinfo{year}{1995}): \emph{\bibinfo{title}{Simulating alternating tree
  automata by nondeterministic automata: New results and new proofs of the
  theorems of Rabin, McNaughton and Safra}}.
\newblock {\sl \bibinfo{journal}{Theoretical Computer Science}}
  \bibinfo{volume}{141}(\bibinfo{number}{1-2}), pp. \bibinfo{pages}{69--107},
  \doi{10.1016/0304-3975(94)00214-4}.

\bibitemdeclare{inproceedings}{Piterman/06/From}
\bibitem{Piterman/06/From}
\bibinfo{author}{N.~\surnamestart Piterman\surnameend} (\bibinfo{year}{2006}):
  \emph{\bibinfo{title}{From Nondeterministic Buchi and Streett Automata to
  Deterministic Parity Automata}}.
\newblock In: {\sl \bibinfo{booktitle}{21st Annual IEEE Symposium on Logic in
  Computer Science}}, pp. \bibinfo{pages}{255--264},
  \doi{10.1109/LICS.2006.28}.

\bibitemdeclare{article}{Rabin+Scott/59/Finite}
\bibitem{Rabin+Scott/59/Finite}
\bibinfo{author}{M.~O. \surnamestart Rabin\surnameend} \&
  \bibinfo{author}{D.~\surnamestart Scott\surnameend} (\bibinfo{year}{1959}):
  \emph{\bibinfo{title}{Finite automata and their decision problems}}.
\newblock {\sl \bibinfo{journal}{IBM Journal of Research and Development}}
  \bibinfo{volume}{3}, pp. \bibinfo{pages}{114--125}, \doi{10.1147/rd.32.0114}.

\bibitemdeclare{inproceedings}{Safra/88/Complexity}
\bibitem{Safra/88/Complexity}
\bibinfo{author}{S.~\surnamestart Safra\surnameend} (\bibinfo{year}{1988}):
  \emph{\bibinfo{title}{On the complexity of omega -automata}}.
\newblock In: {\sl \bibinfo{booktitle}{Proceedings of the 29th Annual Symposium
  on Foundations of Computer Science}}, \bibinfo{publisher}{IEEE Computer
  Society}, pp. \bibinfo{pages}{319--327}, \doi{10.1109/SFCS.1988.21948}.

\bibitemdeclare{inproceedings}{Safra+Vardi/89/Omega}
\bibitem{Safra+Vardi/89/Omega}
\bibinfo{author}{S.~\surnamestart Safra\surnameend} \& \bibinfo{author}{M.~Y.
  \surnamestart Vardi\surnameend} (\bibinfo{year}{1989}):
  \emph{\bibinfo{title}{On $\omega$-automata and temporal logic}}.
\newblock In: {\sl \bibinfo{booktitle}{Proceedings of the 21st annual ACM
  symposium on Theory of computing (STOC'89)}}, \bibinfo{publisher}{ACM}, pp.
  \bibinfo{pages}{127--137}, \doi{10.1145/73007.73019}.

\bibitemdeclare{inproceedings}{Safra/92/Exponential}
\bibitem{Safra/92/Exponential}
\bibinfo{author}{Shmuel \surnamestart Safra\surnameend} (\bibinfo{year}{1992}):
  \emph{\bibinfo{title}{Exponential Determinization for omega-Automata with
  Strong-Fairness Acceptance Condition (Extended Abstract)}}.
\newblock In: {\sl \bibinfo{booktitle}{Proceedings of the 24th annual ACM
  symposium on Theory of computing (STOC'92)}}, pp. \bibinfo{pages}{275--282},
  \doi{10.1145/129712.129739}.

\bibitemdeclare{inproceedings}{Schewe/09/Buchi}
\bibitem{Schewe/09/Buchi}
\bibinfo{author}{Sven \surnamestart Schewe\surnameend} (\bibinfo{year}{2009}):
  \emph{\bibinfo{title}{B{\"u}chi Complementation Made Tight}}.
\newblock In: {\sl \bibinfo{booktitle}{Proceedings of the 26th International
  Symposium on Theoretical Aspects of Computer Science (STACS 2009)}}, pp.
  \bibinfo{pages}{661--672}, \doi{10.4230/LIPIcs.STACS.2009.1854}.

\bibitemdeclare{inproceedings}{Schewe/09/Tighter}
\bibitem{Schewe/09/Tighter}
\bibinfo{author}{Sven \surnamestart Schewe\surnameend} (\bibinfo{year}{2009}):
  \emph{\bibinfo{title}{Tighter Bounds for the Determinization of B{\"u}chi
  Automata}}.
\newblock In: {\sl \bibinfo{booktitle}{Proceedings of the 12th International
  Conference on Foundations of Software Science and Computation Structures
  (FoSSaCS 2009)}}, pp. \bibinfo{pages}{167--181}.

\bibitemdeclare{incollection}{Schwoon/2002/Determinization}
\bibitem{Schwoon/2002/Determinization}
\bibinfo{author}{Stefan \surnamestart Schwoon\surnameend}
  (\bibinfo{year}{2002}): \emph{\bibinfo{title}{Determinization and
  Complementation of Streett Automata}}.
\newblock In: {\sl \bibinfo{booktitle}{Automata Logics, and Infinite Games}},
  {\sl \bibinfo{series}{Lecture Notes in Computer Science}}
  \bibinfo{volume}{2500}, \bibinfo{publisher}{Springer Berlin / Heidelberg},
  pp. \bibinfo{pages}{257--264}, \doi{10.1007/3-540-36387-4\_5}.

\bibitemdeclare{inproceedings}{Vardi/07/Buchi}
\bibitem{Vardi/07/Buchi}
\bibinfo{author}{Moshe~Y. \surnamestart Vardi\surnameend}
  (\bibinfo{year}{2007}): \emph{\bibinfo{title}{The {B\"{u}chi} complementation
  saga}}.
\newblock In: {\sl \bibinfo{booktitle}{Proceedings of the 24th International
  Symposium on Theoretical Aspects of Computer Science (STACS 2007)}},
  \bibinfo{publisher}{Springer-Verlag}, pp. \bibinfo{pages}{12--22},
  \doi{10.1007/978-3-540-70918-3\_2}.

\bibitemdeclare{inproceedings}{Yan/06/Lower}
\bibitem{Yan/06/Lower}
\bibinfo{author}{Qiqi \surnamestart Yan\surnameend} (\bibinfo{year}{2006}):
  \emph{\bibinfo{title}{Lower Bounds for Complementation of $\omega$-Automata
  Via the Full Automata Technique}}.
\newblock In: {\sl \bibinfo{booktitle}{Proceedings of the 33rd Internatilonal
  Collogquium on Automata, Languages and Programming (ICALP 2006)}}, pp.
  \bibinfo{pages}{589--600}, \doi{10.1007/11787006\_50}.

\end{thebibliography}
\end{document}